\newtheorem{theorem}{Theorem}[section]
\def\pcode{\texttt}
\def\code{\pcode}
\newcommand{\removelatexerror}{\let\@latex@error\@gobble}
\let\old@printtopmatter\@printtopmatter
\begin{document}

%
% The "title" command has an optional parameter, allowing the author to define a "short title" to be used in page headers.
\title{Understanding and Optimizing Persistent Memory Allocation\thanks{%
This work was supported in part by NSF grants
CCF-1422649,     % TM
CCF-1717712,    % persistence
and CNS-1900803,
and by a Google Faculty Research award.
}}

%
% The "author" command and its associated commands are used to define the authors and their affiliations.
% Of note is the shared affiliation of the first two authors, and the "authornote" and "authornotemark" commands
% used to denote shared contribution to the research.
% \acmSubmissionID{456}
\author{Wentao Cai, Haosen Wen, H. Alan Beadle, Chris Kjellqvist, \\Mohammad Hedayati, and Michael L. Scott\\[2ex]
	Technical Report \#1008\\[1ex]
	Department of Computer Science, University of Rochester\\
	\{wcai6,hwen5,hbeadle,ckjellqv,hedayati,scott\}@cs.rochester.edu}
\date{March 2020}
% \author{}
% \gdef\shortauthors{Wentao Cai et al.}
% \affiliation{%
%   \department{Computer Science Department}              %% \department is recommended
%   \institution{University of Rochester}            %% \institution is required
%  \streetaddress{Street1 Address1}
%   \city{Rochester}
%   \state{NY}
%   \postcode{14627-0226}
%   \country{USA}
% }
% \email{{wcai6,hwen5,hbeadle,hedayati,scott}@cs.rochester.edu}
% TODO: add chris's email address in
% \renewcommand{\shortauthors}{Cai et al.}

%
% By default, the full list of authors will be used in the page
% headers. Often, this list is too long, and will overlap other
% information printed in the page headers. This command allows the
% author to define a more concise list of authors' names for this
% purpose.
% \renewcommand{\shortauthors}{Cai}

\maketitle

% The abstract is a short summary of the work to be presented in the article.
\begin{abstract}
The proliferation of fast, dense, byte-addressable nonvolatile memory
suggests that data might be kept in pointer-rich ``in-memory''
format across program runs and even process and system crashes.  For
full generality, such data requires dynamic memory allocation, and while
the allocator could in principle be ``rolled into'' each data structure,
it is desirable to make it a separate abstraction.
% so that it can be shared across multiple data structures and so that
% it can provide a looser and more efficient form of consistency.

% However, the existing persistence criterion, e.g., \emph{durable
% linearizability}, do not clearly define the behavior of a desirable
% persistent memory allocator which is essential for manually memory
% managed systems.
 
Toward this end, we introduce \emph{recoverability}, a correctness
criterion for persistent allocators, together with a nonblocking
allocator, \emph{Ralloc}, that satisfies this criterion.  Ralloc is
based on the \emph{LRMalloc} of Leite and Rocha, % ~\cite{leite2018lrmalloc}
with three key innovations.  First, we persist just enough
information during normal operation to permit correct reconstruction of
the heap after a full-system crash.  Our reconstruction mechanism
performs garbage collection (GC) to identify and remedy any
failure-induced memory leaks.  Second, 
% in support of GC, 
we introduce
the notion of \emph{filter functions}, which identify the locations of
pointers within persistent blocks to mitigate the limitations of
conservative GC.
% In the absence of a user-provided
% filter function, we conservatively assume (as in the work of Boehm and
% Weiser) % ~\cite{boehm1988gc}
% that every aligned 64-bit word might be a pointer.  
Third, to allow
persistent regions to be mapped at an arbitrary address, we employ
position-independent (offset-based) pointers for both data and metadata. 
% ~\cite{chen2017offholder}

Experiments show Ralloc to be performance-competitive
with both \emph{Makalu}, % ~\cite{bhandari2016makalu}
the state-of-the-art lock-based persistent allocator, and
such transient allocators as LRMalloc and JEMalloc. % \cite{evans2006jemalloc}
In particular, reliance on GC and offline metadata reconstruction
allows Ralloc to pay almost nothing for persistence during normal
operation.
\end{abstract}

\section{Introduction}\label{sec:intro}

% Dynamic random-access memory (DRAM) has been in mainstream use for
% decades.  In many contexts, it may soon be replaced by byte-addressable
% nonvolatile memory
% (\emph{NVM}), which offers significantly
% higher capacity and lower energy consumption with a latency penalty of less
% than an order of magnitude.
% Intriguingly, NVM raises the possibility that applications might
% access persistent data directly with load and store instructions, rather
% than serializing updates through a block-structured file system.

Byte-addressable nonvolatile memory (\emph{NVM}) offers significantly higher
capacity and lower energy consumption than DRAM, with a latency penalty
of less than an
order of magnitude. Intriguingly, NVM also raises the possibility that
applications might access persistent data directly with load and store
instructions, rather than serializing updates through a block-structured file
system. 
Taking advantage of persistence, however, is not a trivial exercise. If a
persistent data structure is to be recovered after a full-system crash, the
contents of NVM must always represent a consistent logical state---ideally one
that actually arose during recent pre-crash
execution~\cite{izraelevitz2016linearizability}. Ensuring such consistency
generally requires that code be instrumented with explicit write-back and fence
instructions, to avoid the possibility that updated values may still reside only
in the (transient) cache when data that depend upon them have already been
written back.
To save the programmer the burden of hand instrumentation, various
groups have developed % self-contained
persistent versions of popular data
structures~\cite{friedman2018persistent, xu2016nova, yang2019orion,
nawab2017dali} as well as more general techniques to add
\emph{failure atomicity}~\cite{chakrabarti2014atlas} to % conventional
code based on locks~\cite{chakrabarti2014atlas, izraelevitz2016justdo,
  liu2018ido}, transactions~\cite{volos2011mnemosyne, beadle2019qstm,
  ramalhete2019onefile, coburn2011nvheaps, correia2018romulus}, or
both~\cite{rudoff2014pmdk}. 

Whether written by hand or with the aid of an automated system, any
dynamically resizable persistent structure requires a
memory allocator that also tolerates power failures.
% to ensure that freed blocks are reused whenever possible but that
% simultaneously allocated blocks are always disjoint.
One could in principle insist that allocator operations
be integrated into the failure-atomic operations performed
on persistent structures, but this has the disadvantage of introducing
dependences among otherwise independent structures that share
the same allocator.  It also imposes a level of consistency (typically
\emph{durable linearizability}~\cite{izraelevitz2016linearizability})
that is arguably unnecessary for the allocator: we do not in general care
whether calls to \code{malloc} and \code{free} linearize so long as no
block is ever used for two purposes simultaneously or is permanently
leaked.

As in work on transactional memory~\cite{hudson-ismm-2006},
% deliberately skipping citations here to avoid space for bib entries
it is desirable to provide \code{malloc} and \code{free} as
primitives to the authors of persistent data structures.
In so doing, one must consider how to avoid memory leaks if a crash
occurs between allocating a block and attaching it persistently to the
data structure---or between detaching and deallocating it.
% Two principal approaches appear in prior work.
Intel's Persistent Memory Development Kit (PMDK)~\cite{rudoff2014pmdk}
exemplifies one possible approach, in which
the allocator provides a \code{malloc-to} operation that allocates a
block and, atomically, attaches it persistently at a specified address.
A similar \code{free-from} operation breaks the last persistent pointer
to a block and, atomically, returns it to the free list.  HPE's
Makalu~\cite{bhandari2016makalu} exemplifies an alternative approach,
in which a traditional \code{malloc}/\code{free} interface is supplemented
with post-crash garbage collection to recover any blocks that might
have leaked.

We adopt Makalu's approach in our work.  In addition to making it easier
to port existing application code, the traditional interface allows us
to eliminate write-back and fence instructions in allocator code, and
frees the programmer of the need to keep track (in persistent memory) of
nodes that have been allocated but not yet attached to the main data
structure---perhaps because of speculation, or because they are still
being initialized.

As a correctness criterion for a persistent allocator, we introduce the
notion of \emph{recoverability}.  Informally, we say an allocator is
recoverable if, in the wake of post-crash recovery, it ensures that the
metadata of the allocator will indicate that all and only the ``in use''
blocks are allocated.  In a \code{malloc-to}/\code{free-from} allocator,
``in use'' blocks would be defined to be those that have (over the
history of the structure, including crashes) been \code{malloc-to}-ed
and not subsequently \code{free-from}-ed.  In a
\code{malloc}/\code{free} allocator with GC, ``in use'' blocks are
defined to be those that are \emph{reachable} from a specified set of
\emph{persistent roots}.  
% In this case, the application must provide the
% allocator with a tracing mechanism that enumerates the reachable blocks.
In this case, the application and the
allocator must agree on a tracing mechanism that enumerates the reachable blocks.
% these must be a subset of the blocks that have (over all history) been
% \code{malloc}-ed and not subsequently \code{free}-ed.
In the case of conservative tracing it is conceivable that a block will
appear to be reachable without ever having been allocated; by treating
such blocks as ``in use,'' we admit the possibility that a crash will
leak some memory.  As in prior work~\cite{boehm1988gc}, this never
compromises safety, and leaked blocks may often be recovered in
subsequent collections, if values erroneously interpreted as references
have changed.
% allocated, which may leak memory but never interferes with each other.

Given a notion of correctness, we present what we believe to be the
first nonblocking recoverable allocator.  Our system, \emph{Ralloc}, is
based on the (transient) \emph{LRMalloc} of Leite and
Rocha~\cite{leite2018lrmalloc}, which is in turn derived from Michael's
nonblocking allocator~\cite{michael2004scalable}.
% 1. lock-free, scalable, and low online overhead to keep persistent, 
Like LRMalloc, Ralloc uses thread-local caching to fulfill most allocation
and deallocation requests without synchronization.  When it does need to
synchronize, it commonly issues two \emph{compare-and-swap} (CAS)
instructions and a write-back \& fence pair in \code{malloc} or
\code{free}. Most metadata needed for fast operation resides only in
transient memory (with no explicit writes-back required) and is
reconstructed after a full-system crash.  In the event of a partial
crash (e.g., due to a software bug outside the allocator that takes down
one of several cooperating processes), memory may be leaked on a
temporary basis: it can be recovered via garbage collection in some
subsequent quiescent interval.

For type-unsafe languages like C and C++, Ralloc adopts the conservative
strategy of Boehm and Weiser~\cite{boehm1988gc}.  To accelerate recovery,
accommodate nonstandard pointer representations, and reduce the likelihood of
erroneously unrecoverable blocks due to false positives during tracing, we
introduce what we call \emph{filter functions}---optional, user-provided
routines to enumerate the pointers in a given block.

To allow persistent structures to be mapped at different virtual addresses in
different processes and at different times, we use an offset-based
pointer representation~\cite{chen2017offholder, coburn2011nvheaps}
to provide fully \emph{position-independent data}. (Specifically, each
pointer stores the 64-bit offset of the target from the
pointer itself.)  By contrast, several existing systems force data to reside at
the same address in all processes across all of time~\cite{volos2011mnemosyne,
bhandari2016makalu}; others expand the size of each pointer to 128 bits for
base-plus-offset addressing~\cite{oukid2017pallocator, rudoff2014pmdk}.  The
former approach introduces an intractable bin-packing problem as application
needs evolve, and is incompatible with \emph{address space layout randomization}
(ASLR)~\cite{shacham2004random} for security; the latter introduces space
overhead and forces the use of a \emph{wide-compare-and-swap} (WCAS) for atomic
updates.

Summarizing contributions:
\begin{itemize}[leftmargin=1em,parsep=.5ex plus .5ex]
\item
    We introduce \emph{recoverability} as the correctness criterion for
    persistent allocators, eschewing unnecessary ordering among
    allocator operations and preserving the essential properties of
    conventional transient allocators for a world with persistent
    memory.
    % with the hope that it will inspire the implementation of
    % persistent memory allocator.
\item
    We introduce \emph{Ralloc}, the first nonblocking persistent allocator,
    providing allocation and deallocation operations that are fast and
    recoverable, that provide a standard API, and that can accommodate
    full-system failures and potentially independent thread failures. 
\item
    We introduce the notion of \emph{filter functions}, which allow the
    programmer to provide precise type information to enhance the performance,
    generality, and accuracy of conservative garbage collection.
\item
    We employ offset-based smart pointers throughout our code,
    providing persistent, in-memory data structures with fast
    \emph{position-independence}.
\item
    We present performance results confirming that Ralloc scales well to
    large numbers of threads and is performance competitive, on both
    allocation benchmarks and real applications, with both
    JEMalloc \cite{evans2006jemalloc}, a high-performance
    transient allocator, and Makalu~\cite{bhandari2016makalu},
    the state-of-the-art lock-based persistent allocator.
\end{itemize}

% \begin{comment}
% The rest of the paper is organized as follows: 
% we review relevant concepts and existing memory management techniques, and
% introduce recoverability in Section~\ref{sec:bg}. 
% % The important design decisions of
% % Ralloc, including the API of the allocator, position independence, and
% % filter garbage collection, are discussed in Section~\ref{sec:design},  
% Ralloc and its implementation are presented in
% Section~\ref{sec:imp}, followed by correctness proofs in
% Section~\ref{sec:proof} and experimental
% results in Section~\ref{sec:exp}.
% Section~\ref{sec:sum} summarizes our conclusions, and outlines
% possible directions for future work.
% \end{comment}

\section{System Model}\label{sec:model}

% For over 40 years, the dominant system model has been one of volatile
% byte-addressable memory above persistent block-structured magnetic
% disk and flash.  As DRAM reaches the end of its evolutionary life,
% however, new forms of main memory are likely to be nonvolatile.
% Intel's Optane brand of non-volatile memory is already available, in
% capacities as high as 512\,GB per DIMM.  Caches and registers may
% eventually be nonvolatile as well (e.g., if implemented with
% spin-transfer torque magnetic memory), but this is not expected soon.

% \begin{comment}
% System Model
%     HW, OS
%         NV DIMMs, DAX, FS integration
%         ISA support
%         Sharing within apps or among them (using Hodor)
%         Failures: both whole-system and individual process
%             fail-stop, though; not Byzantine
%             Reliable failure detector for processes
%                 whole-system crash easier: it's bad if it doesn't have a
%                 persistent indication of clean shut-down
%     apps
%         Durable linearizability
%         Any application-level persistence model:
%             hand-written 
%             FASEs
%             transactions
%         App-level storage management during normal operation
%             either manual or language-level GC
%         App-level recovery in the wake of a crash
%             nonblocking
%             redo/undo/justdo

% Recoverability
%     Independent allocator (not integrated into atomicity runtime)
%     Definition
%     Nonblocking allocator
%     Conservative collection

% \end{comment}

\subsection{Hardware and Operating System}
\label{sec:HW_and_OS}

We assume a hardware model in which NVM is attached to the system in
parallel with DRAM, and directly exposed to the operating system (OS)
as byte-addressable memory.  This model corresponds (but is not limited)
to recent Intel machines equipped with Optane DIMMs configured in
so-called \emph{App Direct}
mode~\cite{izraelevitz2019performance}.\footnote{%
    Intel also supports an alternative configuration, \emph{memory mode},
    in which DRAM serves as a hardware-managed cache for the larger but
    slower Optane memory, whose persistence is ignored.  We ignore this
    alternative in our work.}
The OS, for its part, makes NVM available to applications through a
\emph{direct access} (DAX)~\cite{rudoff2017persistent}
mechanism in which persistent memory segments have file system names and
can be mapped directly into user address spaces.
DAX employs a variant of \code{mmap} that bypasses the traditional
buffer cache~\cite{chinner2015xfs, wilcox2017ext4, xu2016nova,
  yang2019orion, williams2019persist}.  Like traditional files and
memory segments, DAX consumes physical NVM on demand: a page of physical
memory is consumed only when it is accessed at the first time.  This
feature allows the programmer to define memory segments that are large
enough to accommodate future growth, without worrying about space lost
to internal fragmentation.

Some DAX operations (e.g., \code{mmap}) have effects that are entirely
transient: they are undone implicitly on a system shutdown or crash.
We assume that all others are failure atomic---that is, the OS has been
designed (via logging and boot-time recovery) to
ensure that they appear, after recovery, to have happened
in their entirety or not at all.
By contrast, updates to DAX files mapped into user-level programs are
ordinary memory loads and stores, filtered through volatile caches that
reorder writes-back during normal operation, and that lose their
contents on a full-system crash.

Applications that wish to ensure the consistency of persistent memory
after a crash must generally employ special hardware instructions to
control the order in which cache lines are written back to NVM\@.
On recent Intel processors~\cite{rudoff2017persistent, intel-manual},
the \code{clflush} instruction evicts a line from all caches in the
system, writes it back to memory if dirty, and performs a store fence to
ensure that no subsequent store can occur before the write-back.  The
\code{clflushopt} instruction does the same but without the store fence;
\code{clwb} performs the write-back without necessarily evicting or
fencing.  The latter two instructions can be fenced explicitly with a
subsequent \code{sfence}.  In keeping with standard (if somewhat
inaccurate) usage in the literature, the rest of this paper uses
``flush'' to indicate what will usually be a
% \code{clflush}, \code{clflushopt}, or
\code{clwb} instruction, and uses ``fence'' for \code{sfence}.

We assume that a persistent data structure must, at the very least,
tolerate \emph{full system, fail-stop} crashes, as might be caused by
power loss or the failure of a critical hardware component.  On such a
crash, dirty data still in cache may be lost, but writes-back at
cache-line granularity will never be torn, and there is no notion of
Byzantine behavior.

More ambitiously, we wish to accommodate systems what share data among
mutually untrusting applications with independent failure modes.
This stands in contrast to previous projects, which have assumed that
all threads sharing a given persistent segment are part of a single
application, and are equally trusted.
Recent work~\cite{hedayati2019hodor, erim} has shown that it is
possible, at reasonable cost, to amplify access rights when calling into
a \emph{protected library} and to reduce those rights on return.
The OS, moreover, can arrange for any thread currently executing in a
protected library to finish the current operation cleanly in the event
its process dies (assuming, of course, that the library itself does not
contain an error).  Applications that trust the library can then share
data safely, without worrying that, say, a memory safety error in
another application will compromise the data's integrity.
Protected libraries have the potential to greatly increase performance,
by allowing a thread to perform an operation on shared memory directly,
rather than using interprocess communication to ask a server to perform
the operation on its behalf.
They introduce the need to accommodate situations in which recovery from
process crashes, if any, proceeds in parallel with continued execution
in other processes.

We assume that the OS allows a manager process to be
associated with a protected library, and that it notifies this manager
whenever a process sharing the library has crashed (but the system as a
whole has not).
For whole-system crashes, we include a ``clean
shutdown'' flag in each persistent segment.  If a process (either an
individual user of a persistent structure or the manager of a shared
library) discovers on startup that this flag is not set, it can initiate
segment-specific recovery before continuing normal execution.

\subsection{Runtime and Applications}

We assume that every persistent data structure (or group of related
structures) resides in a \emph{persistent segment} that has a name in the
DAX file system and can be mapped into contiguous virtual addresses in
any program that wants to use it (and that has appropriate file system
rights).  The goal of our allocator is to manage dynamically allocated
space within such segments.  We assume, when the structure is quiescent
(no operations active), that any useful block will be reachable from some
static set of \emph{persistent roots}, found at the beginning of the
segment.  We further assume that an application will be able to tell
when it is the first active user of any given segment, allowing it to
perform any needed recovery from a full-system crash (if the segment was
not cleanly closed) and to start any additional processes needed for
segment management.  (Such processes might be responsible for background
computation or for segment-specific recovery from individual application
crashes.)

For all persistent data, we assume that application code takes
responsibility for \emph{durable
  linearizability}~\cite{izraelevitz2016linearizability,
  friedman2018persistent} or its buffered variant.  Durable
linearizability requires that data structure operations persist, in
linearization order, before returning to their callers.  Buffered
durable linearizability relaxes this requirement to allow some completed
operations to be lost on a crash, so long as the overall state of the
system (after any post-crash recovery) reflects a consistent cut across
the happens-before order of data structure operations.  Both variants
extend in a straightforward fashion to accommodate fail-stop crashes of
a nontrivial subset of the active threads.  They do not encompass cases
in which a crashed thread recovers and attempts to continue execution
where it last left off.  New threads, however, may join the execution.
%% MLS: The following sentence needs fixing if we're going to use it: it
%% makes no sense to speak of individually written variable being
%% durably linearizable.  Durable linearizability is a property of
%% entire histories or of data structure implementations.
% The rest of this paper uses ``explicitly written back'' or
% ``persistently updated'' for atomic variables on which updates are
% made durably linearizable.

While some data structures may be designed specifically for persistence
and placed in libraries, the requisite level of hand instrumentation is
beyond most programmers.  To facilitate more general use of persistence, 
several groups have developed libraries and, in
some cases, compiler-based systems to provide failure atomicity for
programmer-delimited blocks of code.  In some systems, these blocks are
outermost lock-based critical sections, otherwise known as
\emph{failure-atomic sections} (FASEs); examples in this camp include
Atlas~\cite{chakrabarti2014atlas}, JUSTDO~\cite{izraelevitz2016justdo},
and iDO~\cite{liu2018ido}.  In other systems, the atomic blocks are
\emph{transactions}, which may be speculatively executed in parallel
with one another; examples in this camp include
Mnemosyne~\cite{volos2011mnemosyne}, NV-Heaps~\cite{coburn2011nvheaps},
QSTM~\cite{beadle2019qstm}, and OneFile~\cite{ramalhete2019onefile}.
Intel's PMDK library~\cite{rudoff2014pmdk} also provides a transactional
interface, but solely for failure atomicity, not for synchronization
among concurrently active threads.

\section{Recoverability}
\label{sec:recoverability}

Our allocator needs to be compatible with all the programming models
described in the previous section.
As described in Section~\ref{sec:intro}, it must also address the
possibility that a crash may occur after a block has been allocated but
before it has been made reachable from any persistent root---or after it
has been detached from its root but before it has been reclaimed.
Rather than force these combinations to persist atomically, together, we
rely on post-crash garbage collection to recover any memory that leaks.
While GC-based systems require a mechanism to trace the set of in-use
blocks, they have compelling advantages. Use of a standard
API avoids the need to specify attachment points in calls to \code{malloc-to}
and \code{free-from}; it also facilitates porting of existing code.  More
importantly, the garbage collector's ability to reconstruct the state of the
heap after a crash avoids the ongoing cost of flushing and fencing both
allocator metadata and, for nonblocking structures, the \emph{limbo
  lists} used for safe memory reclamation~\cite{michael2004hazard,
  wen2018ibr, fraser-thesis-2004}.

%% Wt: I feel that this part isn't important enough to occupy scarce space, so I
%% tentatively comment it out
%
% In a system based on persistent transactions~\cite{beadle2019qstm,
% volos2011mnemosyne, correia2018romulus, ramalhete2019onefile,
% coburn2011nvheaps}, operations of a persistent allocator (including
% the one we present in Section~\ref{sec:imp}) function much as do
% \emph{boosted} operations in an STM system~\cite{koskinen-ppopp-2008}:
% a block that is allocated in a (speculative) transaction is
% immediately removed from the free list, but is put back if the
% transaction aborts.  A block that is freed during a transaction is
% kept in limbo and only ``really'' removed at commit time.  Concurrent
% accesses to the free list never cause an active transaction to abort.

We say that a persistent allocator is \emph{recoverable} if, in the wake
of a crash, it is able to bring its metadata to a state in which all and
only the ``in use'' blocks are allocated.
% In persistent applications, the allocator should guarantee that after
% a crash, blocks that the application still wants to access should
% remain allocated; for those the application no longer needs, the
% allocator should consider them as unallocated. Conventionally
% persistent applications may have some ``contract'' with the allocator
% about how to trace the set of blocks we call \emph{``in use''} blocks.
For applications using the \code{malloc-to}/\code{free-from} API,
``in use'' blocks can be defined to be those that have (over the history
of the structure, including crashes) been \code{malloc-to}-ed and not
subsequently \code{free-from}-ed. In a \code{malloc}/\code{free}
allocator with GC, we define ``in use'' blocks to be those that are
reachable from the persistent roots.
The notion of reachability, in turn, requires a mechanism to identify
the pointers in each node of the data structure, so that nodes can be
traced recursively.  If the identification mechanism is conservative,
% (a possibility we consider in Section~\ref{sec:conservative-gc}),
then some blocks that were never actually allocated prior to a crash may
be considered to be ``in use'' after recovery.

We observe that, given an appropriate tracing mechanism, almost any correct,
transient memory allocator can be made recoverable under a full-system-crash
failure model.  During normal operation, no block will be leaked or used
for more than one purpose simultaneously; in the wake of a crash, a
fresh copy of the allocator can be reinitialized to reflect the
enumerated set of in-use blocks.  (In a type-safe language, the
reinitialization process may also perform compaction.  This is not
possible with conservative collection, since we do not always know
whether to update a word that appears to point at a relocated block.)
Very little in the way of allocator metadata needs to be saved
consistently to NVM\@.
The observation transforms the central question of persistent allocation
from ``how do we persist our \code{malloc} and \code{free} operations?''
to ``how do we trace our data structures during recovery?''

But not all allocators are created equal.  There are compelling reasons,
we believe, why a persistent allocator should employ \emph{nonblocking}
techniques.  First, a blocking allocator inherently compromises the
progress of any otherwise nonblocking data structure that relies on it
for memory management.  Second, nonblocking algorithms dramatically
simplify the task of post-crash recovery, since execution can continue
from any reachable state of the structure (and the allocator).  Third,
even if cross-application sharing employs a protected library that
arranges to complete all in-flight operations in a dying process, the
problem of priority inversion suggests that a thread should never have
to wait for progress in a different protection domain.

% Some researchers have argued~\cite{hedayati2019hodor} that it may be
% possible, at reasonable cost, to allow \emph{protected libraries} to
% access data that the rest of the application cannot, and to arrange
% for any thread currently executing in a protected library to finish
% the current operation cleanly in the event its process dies.
% If such a mechanism is not available, however, or if its costs are
% considered too high, it becomes extremely attractive to employ
% \emph{nonblocking} operations on shared data---with a nonblocking
% allocator---so that the state of shared memory remains usable even in
% the event of individual process failures.  While nonblocking code is
% generally more difficult to write than lock-based code, recovery of a
% nonblocking structure in the wake of a full-system crash can be almost
% trivial.  Lock-based code, by contrast, is likely to need undo- or
% redo-based logging and nontrivial recovery.  And, of course, even in
% the absence of failures, nonblocking algorithms provide valuable
% protection against performance anomalies caused by inopportune
% preemption.  These factors motivate our development of a
% \emph{nonblocking} persistent allocator.

Among existing transient allocators, the first nonblocking
implementation is due to Maged Michael~\cite{michael2004scalable}.
It makes heavy use of the CAS instruction in allocation and deallocation
and is noticeably slower than the fastest lock-based allocators.
The more recent LRMalloc of Leite and Rocha~\cite{leite2018lrmalloc}
uses thread caching to reduce the use of CAS on its ``fast path,'' and
makes allocations and deallocations mostly synchronization free.  Other
lock-free allocators include NBMalloc~\cite{gidenstam2010nbmaloc}
and SFMalloc~\cite{sangmin2011sfmalloc}.
Due to the complexity of their internal data structures, these appear
much harder to adapt to persistence; our own work is based on LRMalloc.

% \subsection{Conservative Collection}
\label{sec:conservative-gc}

% In a type safe language, given all persistent roots, any transient
% allocator could (in principle) rely on type information provided by
% the compiler to enumerate reachable blocks during post-crash recovery.
% This ability breaks down in type-unsafe languages like C and C++, for
% which the set of reachable blocks is undecidable.  To accommodate such
% languages, Makalu requires all references to be maintained as
% standard, 64-bit aligned addresses.  It then employs
% \emph{conservative collection}~\cite{boehm1988gc}: during tracing, any
% 64-bit aligned datum whose bit pattern matches the beginning of a
% block in the heap will be interpreted as a pointer to that block.  So
% long as the application follows the rules, in the wake of any
% full-system crash, this convention avoids false negatives (``in use''
% blocks considered garbage) at the possible cost of occasional false
% positives (previously unallocated blocks considered to be ``in use'').
% These conventions make Makalu recoverable according to the definition
% of Section~\ref{sec:recoverability}.

Despite the development of nonblocking allocators, fast, nonblocking
\emph{concurrent} (online) collection remains an open research problem.
We adopt the simplifying assumption that crashes are rare and that a
blocking approach to GC will be acceptable in practice.  It is clearly so
in the wake of a full-system crash, when there are no application
threads to block.  In the event of partial (single-process) failures,
memory may leak temporarily.  If the allocator identifies a low-memory
situation and knows that one or more processes have crashed since GC was
last performed, it can initiate a stop-the-world collection.

\section{Ralloc}\label{sec:imp}

Ralloc is based on the lock-free LRMalloc
allocator~\cite{leite2018lrmalloc}.  From that system it inherits the
notion of thread-local caches of free blocks; allocations and
deallocations move blocks from and to these caches in most cases,
avoiding synchronization.  Operations that synchronize are rare; when
they occur, they typically incur two CAS instructions.

In adapting LRMalloc to persistence, we introduce four principal
innovations:
\begin{enumerate}[leftmargin=1.5em,parsep=.5ex plus .5ex]
\item
We rely on the fact that all blocks in a given \emph{superblock} (major segment
of the heap) are of identical size to avoid the need to persistently maintain a
size field in blocks.
% in block headers , and to persist that size during allocation.
Instead, we persist the common size during superblock allocation, which
is rare.
% Replying on the fact that all blocks in a given \emph{superblock}
% (major segment of the heap) are of identical size, Ralloc implements
% low-cost \emph{recoverability} by persisting this information whenever
% allocating a new superblock, which however is a rare event.
In the typical \code{malloc} operation, nothing needs to be written back
to memory explicitly.
% most of the metadata required for fast operation in the absence of
% crashes is never explicitly written back to memory. 
\item
To improve the performance, accuracy, and generality of conservative
garbage collection, we introduce the notion of \emph{filter functions}.
These serve to enumerate the references found in a given block, for use
during trace-based collection.  In the absence of a user-provided filter
function, we fall back on traditional conservative collection, and
assume that every 64-bit aligned bit pattern is a potential reference.
\item
% To minimize the fragmentation of DAX
% memory, 
We reorganize LRMalloc's data into three respectively contiguous regions, in
which the major (superblock) region is utilized in increasing order of virtual
address as space demand increases. Corresponding physical pages will be
allocated by the OS on demand.
\item
To allow a persistent heap to be mapped at different addresses in different
applications (or instances of the same application over time), we use an
offset-based pointer representation~\cite{chen2017offholder,coburn2011nvheaps}
for Ralloc's metadata references, and encourage
applications to do the same for data structure pointers. The result can
aptly be described as \emph{position-independent data}. In our code
base, offsets are implemented as C++ smart pointers.
\end{enumerate}

\smallskip

% In the remainder of this section, we first present Ralloc's API.  We
% then introduce the internal data structures and the region management
% of Ralloc, followed by implementation details for allocation and
% deallocation routines, recovery and filter garbage collection, and
% position-independent pointers.

\subsection{API}\label{sec:api}

\begin{figure} \removelatexerror
\begin{algorithm2e}[H]
% \small
	\Class{Ralloc}{
		\Fn{init$\,($string path, int size$)$ $:$ bool}{
			\tcp{create or remap heap in \emph{path}}
			\tcp{\emph{size} of superblock region}
			\tcp{return true if heap files exist}
		}
		\Fn{recover$\,()$ $:$ bool}{
			\tcp{issue offline GC and reconstruction if dirty}
			\tcp{return true if GC occurs, otherwise false}
		}
		\Fn{close$\,()$ $:$ void}{
			\tcp{give back cached blocks and flush heap}
			\tcp{set heap as not dirty}
		}
		\Fn{malloc$\,($int size$)$ $:$ void*}{
			\tcp{allocate \emph{size} bytes}
			\tcp{return address of allocated block}
		}
		\Fn{free$\,($void* ptr$)$ $:$ void}{
			\tcp{deallocate \emph{ptr}}
		}
		\Fn{setRoot$\,($void* ptr, int i$)$ $:$ void}{
			\tcp{set \emph{ptr} to be root \emph{i}}
		}
		\Fn{getRoot$\,<$T\,$>($int i$)$ $:$ T*}{
			\tcp{update root type info}
			\tcp{return address of root \emph{i}}
		}
	}
% \end{multicols}
\end{algorithm2e}
\vspace*{-.5\baselineskip} 
\caption{API for Ralloc.}
\label{fig:api}
% \Description[Ralloc API]{The Ralloc API has seven entry points, to one,
%   initialize the system; two, recover after a crash; three, close the
%   system cleanly; four, allocate a block; five, free a block; six, set
%   a persistent root; and seven, retrieve such a root and set its type.}
\vspace*{-\baselineskip} 
\end{figure}

The API of Ralloc is shown in Figure~\ref{fig:api}. Function \code{init()} must
be called to initialize Ralloc prior to using it.  This function checks the
persistent heap referenced by the parameter \code{path} to determine whether
this is a fresh start, a clean restart without unaddressed failure, or a dirty
restart from a crash. A fresh start will create persistent heap on NVM, map it
in DAX mode, initialize the metadata, and return \code{false}. A clean restart
will remap persistent heap to the address space and also return
\code{false}. A dirty restart will remap persistent heap to the address
space and return \code{true}, indicating that recovery is needed. If the
application receives \code{true} from \code{init()}, it will need to call
\code{recover()} (after calling \code{getRoot<T>()}---see below) to invoke the
offline recovery routine and reconstruct metadata.

It is the programmer's responsibility to provide a sufficiently large
\code{size} to \code{init()}.  If space runs out, calls to \code{malloc()} will
fail (return \code{null}). Resizing currently requires an allocator restart and
an \code{init()} call with a larger size. 
As a practical matter,
% Benefit from separating different data into three regions,
resizing only changes
the first word of the superblock region and calls \code{mmap} with a larger size; no data
rearrangement is required.
% Wt: I no longer think the following is needed:
% Given that providing a large
% \code{size} and using it partially is not harmful (see Section~\ref{sec:nvml}),
% we believe space hardly ever runs out.

% If in any occasion the region is out of space, the incoming allocation
% will fail unless some blocks have been deallocated or the program is
% restarted with the larger \code{sz} to \code{init()}. 
% MLS: I don't think the following is needed:
% Ralloc always rounds up \code{sz} to the multiple of a page (4\,KB on x86-64).
At the end of application execution, \code{close()} should be called to
gracefully exit the allocator.
In the process, any blocks held in thread-local caches will be returned
to their superblocks,
and the persistent heap will be written back to NVM for fast restart.
% superblock and flush persistent regions to allow fast restart from
% normal exit. 

The functions used for allocation and deallocation are similar to the
traditional \code{malloc()} and \code{free()}. As a big picture, allocation and
deallocation requests are segregated by their size, into corresponding
\emph{size classes}. Most requests are fulfilled by thread-local caches of
blocks of each size class, avoiding synchronization. If the cache is
empty during
allocation, Ralloc either fetches a partially used \emph{superblock} (a chunk of
blocks in the same size) from the \emph{partial list} of the size class, or
fetches a free superblock from the \emph{superblock free list}. Both lists are
accessible by all threads. All free blocks in a fetched superblock will be
pushed into the thread-local cache. If the cache is full during deallocation,
all cached blocks will be transferred to their superblocks which may already or
then appear in a partial list. Our detailed implementation will be discussed in
Section~\ref{sec:alloc}.

In support of garbage collection, Ralloc maintains a set of
\emph{persistent roots} for data structures contained in the heap.
These serve as the starting points for tracing.
The \code{setRoot()} and \code{getRoot<T>()} routines are used to store
and retrieve these roots, respectively.  In C++, the \code{getRoot<T>()}
routine is generic in the type \code{T} of the root;
as a side effect, it associates with the root a pointer to the \code{T}
specialization (if any) of the GC filter function
(Section~\ref{sec:filter}) so that we avoid the need for position-independent
function pointers.  When \code{init()} returns \code{true}, the application
should call \code{getRoot<T>()} for each useful persistent root before it calls
\code{recover()}.
The application may use some enum type to decide which data structure is 
registered in which root, for easily tracing them back in restart.
It is thread safe to concurrently call \code{setRoot()} or \code{getRoot<T>()} with
different \code{i}, but not with the same \code{i}. 

\subsection{Data Structures}\label{sec:ds}

A Ralloc heap comprises a \emph{superblock region}, a \emph{descriptor
  region}, and a \emph{metadata region}, all of which lie in NVM
(Figure~\ref{fig:descsb}).
Only the fields shown in bold are flushed and fenced explicitly online.
(All fields are eventually written back implicitly, of course, allowing
quick restart after a clean shutdown.)  The three regions are respectively mapped
% contiguously 
into the address space of the application.
% , with the superblock region at the end.
The superblock region, which is by far the largest of the three,
begins with an indication of its maximum size,
which is set at initialization time and never changed.
A second word indicates the size of the prefix that is currently in use.
The descriptor region is always allocated at its maximum size, which can
be inferred from the \code{size} of the superblock region.
The metadata region has a fixed size, dependent on the number of size
classes, but not on the size of the heap.
% Subsequent initialization with a smaller size will trigger an exception.

\begin{figure}
\begin{center}
    \includegraphics[scale=0.65]{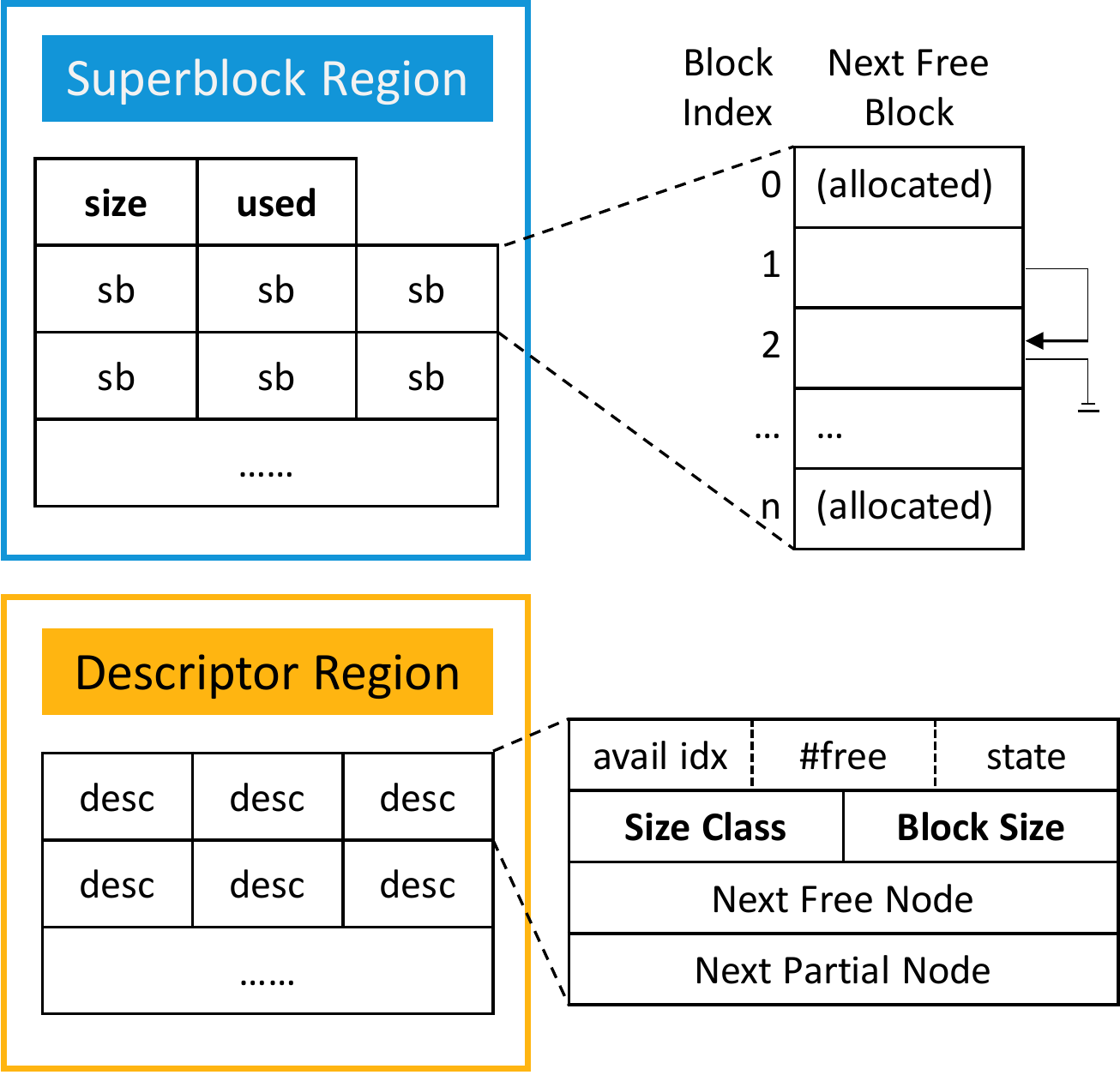}

    \vspace{0.25cm}

    \includegraphics[scale=0.65]{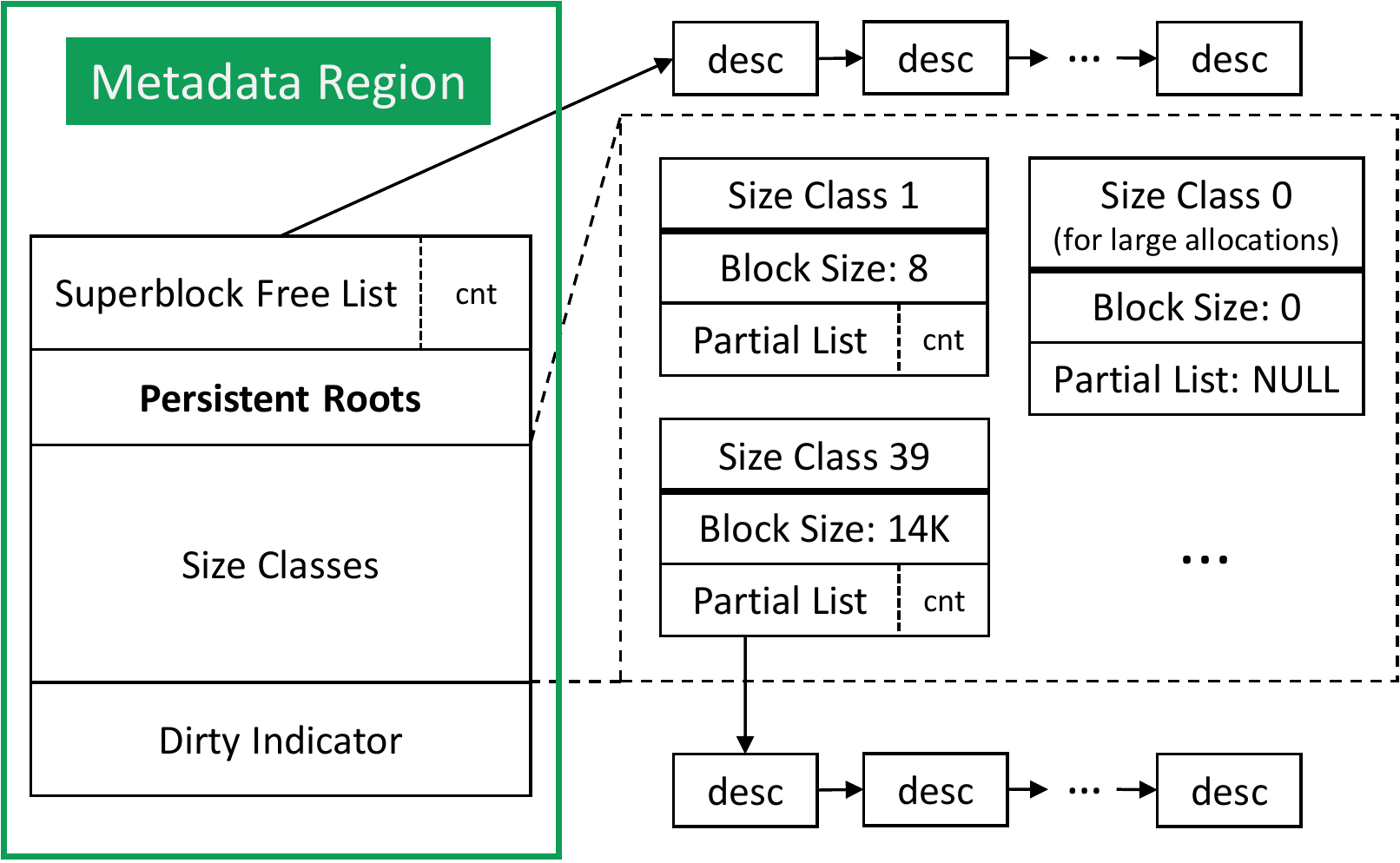}
\end{center}

\caption{Superblock, Descriptor, and Metadata Regions.
    Variables written back explicitly online are \textbf{bold}.}
\label{fig:descsb}
% \Description[Ralloc data structure layout.]{The superblock
%   region comprises a used size field followed by an array of
%   superblocks.  Each superblock comprises an array of blocks; blocks
%   that are free are linked in a free list.  The descriptor region
%   comprises an array of descriptors.  Each descriptor comprises an
%   anchor word followed by fields for size class, block size, next free
%   node, and next partial node.  The anchor word comprises the index of
%   the first block on the free list, the number of free blocks, and the
%   state of the superblock.  Ralloc metadata comprises a pointer to the
%   head of the superblock free list, a set of persistent roots, an array
%   of size class descriptors, and the dirty indicator.  Each size class
%   descriptor comprises an indication of block size and a pointer to the
%   free list for such blocks.  Both the superblock free list pointer and
%   the block free list pointers include an embedded count of the list
%   length, for atomic update.}
\vspace*{-\baselineskip} 

\end{figure}

The superblock region holds the actual data of the heap.
After the initial \emph{size} and \emph{used} words, it holds an array
of \emph{superblocks}. Each superblock is
64\,KB in size, and comprises an array of blocks. All blocks in a given
superblock are of the same size.  If a block is free (not in use), its
first word contains a pointer to the next free block, if any, in the
same superblock.

A \emph{descriptor} % (i.e., header)
describes a superblock, and is the locus of synchronization on that
superblock.  Each descriptor
is 32\,B in size, padded out to a 64\,B cache line. Within a given heap, the
$i$-th descriptor corresponds to the $i$-th superblock, allowing either to be
found using simple bit manipulation given the location of the other.  Each
descriptor contains five fields: a 64\,b \emph{anchor}, a \emph{size
class} index, a \emph{block size}, and two optional pointers to form the
\emph{superblock free list} and a \emph{partial list}.  The anchor, which is
updated atomically with CAS, indicates the index of the first block on the block
free list, the number of free blocks, and the \emph{state} of the corresponding
superblock. The state is one of \code{EMPTY}, \code{PARTIAL}, or \code{FULL},
meaning the superblock is entirely free, partially allocated, or fully
allocated. The \emph{size class} field indicates which of several standard sizes
is being used for blocks in the superblock, or 0 if the superblock
comprises a single block that is larger than
any standard size. The \emph{block size} field indicates the size of each
block in this superblock, either fetched from a size class or the actual size of
the large block set during allocation. When the superblock of this descriptor is
in the superblock free list or a partial list, one of the descriptor's
pointer fields is set to the next node in the list.

The \emph{size class} field (and also \emph{block size} if it is a large
block) has to be persisted before a superblock is used for allocation
because Ralloc needs the size information of every reachable block to recover
metadata.

The metadata region holds the portion of Ralloc's metadata, other than
descriptors, that is needed on a clean restart.  Unlike LRMalloc, which
always calls \code{munmap()} to give empty superblocks back to the OS, Ralloc
implements a \emph{superblock free list}. This list is a lock-free LIFO
list (a Treiber stack~\cite{treiber-1986}) of descriptors, linked
through their \emph{next free node} fields. Given
the 1-to-1 correspondence between superblocks and descriptors, Ralloc finds a
free superblock easily given a pointer to its descriptor.

\emph{Persistent roots} point to the external entry points of persistent data
structures in the superblock region.  They comprise the starting points for
tracing during garbage collection: only blocks identified as potentially
reachable from the roots will be preserved; all other blocks will be identified
as unallocated. In our current implementation a metadata region contains 1024
roots; applications can initialize as many of these as required for a given set
of data structures.

Our current implementation supports 39 different \emph{size classes},
supporting blocks that range from 8 to 14\,K bytes~\cite{leite2018lrmalloc}.
A 40th class (number 0) supports blocks that are larger than those of
any standard class.
Each superblock holds blocks of exactly one class.
Each size class metadata record contains the block size and a
\emph{partial list};
elements on these LIFO lists are descriptors (linked through their \emph{next
partial node} fields) for partially filled superblocks whose blocks are of
the given size class.
The head of both partial lists and the superblock free list have 34 bits
devoted to a counter (a benefit of the persistent pointers discussed in
Section~\ref{sec:PIdata} below) to avoid the \emph{ABA
  problem}~\cite[Sec. 2.3.1]{SMS}.
% It is a common problem where a CAS succeeds even though the value has
% been changed from A to B and back to A again. When it occurs,
% insertion to the list may insert a node with an obsolete \code{next}
% pointer.

Ralloc uses a \emph{dirty indicator}, implemented as a ``robust''
\code{pthread\_mutex\_t}~\cite{pthread-mutex}, to indicate whether it is
necessary to perform recovery before using the allocator.
Every time a process starts or restarts a Ralloc heap, Ralloc tries to lock the
mutex. If it fails with error code \code{EOWNERDEAD}, meaning that the
previously owning
thread terminated with the mutex locked, then the allocator was
not cleanly shut down.
During a normal exit, after all metadata has been written back to NVM,
the mutex is unlocked. 
The current implementation can only detect a full-system crash, since a
robust \code{pthread\_mutex\_t} records failure for only a single
application.
% though multiple applications coming separately is supported.
We will discuss, in Section~\ref{sec:independent}, a possible extension
of this mechanism to accommodate independent process failures.

In addition to its three persistent regions, Ralloc maintains transient
thread-local caches of blocks of each size class.  
% Wt: comment out since we've mentioned them before.
% Allocation and deallocation move blocks from and to these caches in
% the common case, avoiding synchronization. 
% During a normal exit, all blocks
% saved in thread-local caches will be transferred back to their superblocks. 
In the event of a crash, all record of blocks held in thread-local caches will
be lost, and must be recovered via garbage collection.
On a clean shutdown, the thread-local caches are naturally empty.

Superblocks, descriptors, size classes, partial lists, and thread-local caches
are all inherited from LRMalloc.  Ralloc reorganizes % most of
them into three contiguous regions; adds persistent roots, the
superblock free list, and the dirty indicator; links descriptors rather
than superblocks in partial lists; and persists the necessary fields.

\subsection{Persistent Region Management}

In order to limit the length of the superblock free list, initially only
% at most
1\,GB of a heap's superblock region is included.  More
superblocks are made available on demand until the heap reaches the
\emph{size} limit specified in the most recent call to \code{init()}.
Within the specified limit, Ralloc obtains more space by CAS-ing the
\code{used} field to a greater number (with an explicit flush and
fence).
This update happens internally either when no superblock is available or
when a large allocation request is made.
The descriptor region always occupies its maximum size of
\emph{size}/1024 (a superblock is 64\,KB whereas a descriptor is 64\,B).
The metadata region has a fixed size, not proportional to the overall
size of the heap.

%% WTC: I think these issues have been addressed in the current version.
% \note{MLS: This section needs work, I think.  
% Where is the ``used size'' field located?  
% How can CASing it serve to allocate more space?
% Doesn't the space have to be requested from the OS and mapped in with DAX?  
% In text I commented out in the first paragraph of Section
% \ref{sec:ds}, it said that init() allocated a superblock region of
% size \emph{sz} in NVM.  Perhaps you meant of \emph{maximum} size \emph{sz}?
% Also: 1GB seems like a pretty big initial size.  Suppose I want lots of
% 10MB heaps?  Is that out of the question?
% }

\subsection{Allocation and Deallocation}\label{sec:alloc}

% Wt: already mentioned most of the following above, do we want to skip it and
% say something else in this section?
Allocation requests for small objects are segregated by size class. The
thread-local cache of free small blocks is typically not empty, allowing the
request to be fulfilled without synchronization most of the time. An empty
thread-local cache will be refilled before satisfying the allocation request.
The cache is refilled with all available blocks in a partial
superblock, or with all blocks in a new superblock if the partial list is empty.
The anchor in the corresponding descriptor is updated with a CAS when the
superblock is used to refill the cache. A new superblock is taken from
the superblock free list.  If the list is empty, it is refilled by
expanding the used space of the superblock region.  Our current
implementation performs such expansion in 1\,GB increments.
We did not observe significant changes in performance with larger
or smaller expansion sizes.

When a small object is being deallocated, its descriptor is found via
bit manipulation. Ralloc determines its size class from the
descriptor. If the thread-local cache is not full, then the freed block
is simply added to the cache.  Otherwise, all of the blocks in the cache
are first pushed back to the block free list(s) of their respective
superblock(s). A descriptor changed from \code{FULL} to \code{PARTIAL}
is pushed to the partial list; a descriptor changed from \code{FULL} to
\code{EMPTY} is retired and pushed to the superblock free list.
A descriptor
that is changed from \code{PARTIAL} to \code{EMPTY}
will be retired, later, when it is fetched from the partial list.

Allocation and deallocation routines for small objects are inherited largely
from LRMalloc.  Given space constraints, the code is not shown here; it
differs from the original mainly in the addition of flush and fence
instructions needed to persist the fields shown in bold in Figure~\ref{fig:descsb}.
Large objects see a bit more change in the code.
In LRMalloc, any allocation over 14\,KB is fulfilled directly by \code{mmap} at
an arbitrary virtual address. This approach is not applicable in Ralloc because
all of our allocations must lie in the same persistent segment (i.e.,
the superblock region). Ralloc therefore 
rounds the size of a large allocation up to a multiple of the superblock size
(64\,KB) and allocates it by expanding the used space in the superblock region.
The size is persistently stored in the first corresponding descriptor. Although this
approach may introduce some external memory fragmentation, we consider it
acceptable if large allocations are rare. Ralloc with mostly small allocations
has no external fragmentation and little internal fragmentation.

When a large block is deallocated, it is split into its
constituent superblocks, which are then pushed to the superblock free list.
Both allocation and deallocation, for both small and large objects, are 
lock-free operations.
Updates to persistent fields (those shown in bold in Figure~\ref{fig:descsb})
are flushed and fenced to enable post-crash
recovery.  Other fields are reconstructed during recovery.

\subsection{Recovery}\label{sec:rec}

Recovery employs a tracing garbage collector to identify
all blocks that are reachable from the specified persistent roots.
Because the sizes of all blocks are determined by
their superblock (whose size field is persisted), it is easy to tell how
much memory is rendered reachable by any given pointer (pointers to
fields \emph{within} a block are not supported).
% This is similar to the mark phase and the sweep phase in
% bdwgc~\cite{boehm1988gc}.
After GC, all metadata is reconstructed.
In a bit more detail, recovery comprises the following steps:
\begin{enumerate}[leftmargin=1.5em,parsep=0ex plus .5ex]
	\item Remap persistent regions to memory.\label{sp:1_s}
	\item Initialize thread-local caches as empty.
	\item Initialize empty superblock free and partial lists.\label{sp:1_e}
	\item Set the filter function for each persistent root. \label{sp:filter}
	\item Trace all blocks reachable from persistent roots and put their
	addresses in a transient set.\label{sp:gc}
	\item Scan superblock region and keep only traced blocks.\label{sp:sweep}
	\item Update each descriptor accordingly.
	\item Reconstruct the partial list in each size class.
	\item Reconstruct the superblock free list.\label{sp:recon}
	\item Flush the three persistent regions and issue a fence. \label{sp:gc_e}
\end{enumerate}

\smallskip

Steps \ref{sp:1_s}--\ref{sp:1_e} are done in \code{init()}. Step~\ref{sp:filter}
is done when \code{getRoot<T>()} is called for each persistent root. Steps
\ref{sp:gc}--\ref{sp:gc_e} are done in \code{recover()}.
When \code{init()} is called, the dirty indicator (see
Section~\ref{sec:ds}) is reinitialized and set dirty until a call to
\code{close()}; any crash that happens in the recovery steps leaves the 
allocator dirty.

\subsubsection{Filter Garbage Collection}\label{sec:filter}

Precise GC, of course, is impossible in C and C++, due to the absence of
type safety.  Conservative collection admits the possibility that some
64\,b value will erroneously appear to point to a garbage block,
resulting in a block that appears to be in use, despite the fact that it
was not allocated (or was freed) during pre-crash execution---in effect,
a memory leak.  Arguably worse, conservative collection is
incompatible with pointer tagging and other nonstandard representations.
\emph{Filter functions} serve to address these limitations.

%% WTC: I think now pseudocode is cleaned up.
% \note{MLS: The pseudocode is very inconsistent with respect to fonts.
%   For example, function calls sometimes italicize the name of the
%   function and sometimes don't.  Square brackets and parentheses are
%   sometimes italicized and sometimes not.  Also, PLEASE don't use LaTeX
%   math mode (dollar signs) to indicate italics.  That sets the
%   characters inside as individual letters rather than words, and gets
%   the spacing (kerning) all wrong.  Take a look at instances of the
%   character 'f', for example.}

\begin{figure}
\removelatexerror
\begin{algorithm2e}[H]
% \small\sf

\Class{Ralloc}{
	\ldots\ \tcp{functions mentioned in API and metadata}
	\textsl{roots} : Persistent roots\,\;
	\emph{rootsFunc} : Functions for persistent roots\,\;
	\Fn{getRoot$\,<$T$\,>($int i$\,) :$ T*}{
		% \tcp{Not only fetch root but update type info}
		\emph{rootsFunc$[$i$\,]$} = \\
		\Lmd{$[\,]($void* p, GC\textsl{\&} gc$) :$ void}{\emph{gc.visit$\,<$T$\,>$}\,(\emph{p})\,\;}
		\Return{roots$[$i$\,];$}
	}
}

\Class{GC}{
	\emph{visitedBlk} : Set of visited blocks\,\;
	\emph{pendingBlk} : Stack of pending blocks to be visited\,\;
	\emph{pendingFunc} : Stack of functions of pending blocks\,\;

	\Fn{visit$\,<$T$\,>($T* ptr$) :$ void}{
		% \tcp{Visit the block \emph{ptr} refers to}
		\If{ptr $\in$ superblock region \And ptr $\notin$ visitedBlk}{
			insert \emph{ptr} to \emph{visitedBlk}\,\;
			push \emph{ptr} to \emph{pendingBlk}\,\;
			push
				\Lmd{$[\,]($void* p, GC\textsl{\&} gc$) :$ void}{\emph{gc.filter$\,<$T$\,>$}\,(\emph{p})\,\;}
			to \emph{pendingFunc}\,\;
		}
	}

	\Fn{filter$\,<$T$\,>($T* ptr$) :$ void}{
		\tcp{Default conservative filter function}
		get descriptor \emph{desc} of \emph{ptr}\,\;
		read block size \emph{size} from \emph{desc} \tcp*[l]{0 if invalid}
		\For{i$\,\,=0$ \To size$\,-1$}{
			read potential pointer \emph{curr} at \emph{i}-th byte in \emph{ptr}\,\;
			\emph{visit}(\emph{curr})\,\;
		}
	}

	\Fn{collect$\,() :$ void} {
		\tcp{To get the set of reachable blocks}
		\For{i$\,\,=0$ \To max root}{
			\If{roots$[$i$\,]$ $\neq$ NULL}{
				\emph{rootsFunc$[$i$\,]$}(\emph{roots$[$i$\,]$},
                                \emph{*this})\,\;
			}
		}
		\While{pendingBlk $\neq \emptyset$}{
			pop \emph{func} from \emph{pendingFunc}\,\;
			pop \emph{blk} from \emph{pendingBlk}\,\;
			\emph{func}(\emph{blk}, \emph{*this})\,\;
		}
	}
}
\vspace*{-\baselineskip}
\end{algorithm2e}
\caption{Filtered garbage collection.}
\label{fig:gc}
% \Description{Filter function pseudocode.}
\vspace*{-\baselineskip}
\end{figure}

Figure~\ref{fig:gc} shows the basic variables and functions related to filter GC.
The basic principle is that, when \code{getRoot<T>()} is called after 
\code{init()} but before \code{recover()}, its type (obtained via
template instantiation) is recorded in the transient array
\code{rootsFunc}, in the form of a lambda expression that calls the
\code{visit<T>()} function. 
Then in \code{recover()}, \code{collect()} traces all reachable
blocks by calling \code{visit<T>()} iteratively from persistent roots until no
more new blocks are found.
Each
\code{visit<T>()} function marks its block as reachable and then calls
\code{filter<T>()}, which is assumed to call 
\code{visit<U>()} for each pointer of type \code{U} in the block.

For each type \code{U} used for persistent data structures, the
programmer is % strongly
encouraged to provide a corresponding \code{filter<U>()}. 
Figure~\ref{fig:filter} presents an example filter
for binary tree nodes.
% The \code{visit()} function, described in Section~\ref{sec:filter},
% marks its target as reachable and recursively invokes its filter
% function.
If no \code{filter<U>()} has been specialized, the default conservative
filter, defined in Figure~\ref{fig:gc}, is called instead.

\begin{figure}
	\removelatexerror
	\begin{algorithm2e}[H]
	% \small\sf
	
	\Class{TreeNode}{
                \ldots\ \tcp{content fields}
		\emph{left}, \emph{right} : \emph{TreeNode}*\,\;
	}
	
	\Fn{filter$\,($TreeNode* ptr$\,) :$ void}{
		\emph{visit}(\emph{ptr}$\rightarrow$\emph{left});
                \emph{visit}(\emph{ptr}$\rightarrow$\emph{right})\,;\
	}
	\end{algorithm2e}
	\vspace*{-.2\baselineskip}
	\caption{Example of a filter function for binary tree nodes.}
	% \Description[Filter function for binary tree nodes.]
    %         {A treenode object with left and right children has a filter
    %           function that calls the visit function on each child.}
	\label{fig:filter}
	\vspace*{-\baselineskip}
\end{figure}

While the implementation shown here utilizes C++ templates, filter functions
are easily adapted to pure C by arranging for \code{visit()} and
\code{getRoot()} to take a pointer to the appropriate filter function
as an extra parameter, and for filter functions themselves to pass the
appropriate function pointer in each of their calls to \code{visit()}.

Note that the function pointers used in GC are reestablished in each execution,
avoiding any complications due to recompilation or address space layout
randomization (ASLR)~\cite{shacham2004random}.  Mechanisms to tag
persistent roots with persistent type information are a potential topic
for future work.

%% WTC: I don't think any application could reuse a root. At restart, the
%% application needs to get the root and assign it to a particular variable, and
%% if a root is reused, the application may not know exactly what this root is.
%% Also, We don't explicitly erase a root; instead, the application can
%% overwrite a root by setRoot(). Note that the type info is always set 
%% during the recovery when getRoot<T>() is called, as a result there's no
%% atomicity issue.
% \note{MLS: How do we erase a root?  Are there atomicity issues around
%   erasing the type information as well?  If we re-use a root for a
%   different type, we need to be sure that the type information of the
%   old version doesn't hang around.}

\subsubsection{Sharing Across Processes}\label{sec:independent}

The mechanisms described above suffice to manage a persistent heap that
is used by
one application at a time.  While this application may be multithreaded,
its threads all live and die together.  If we wish to allow a heap to be
shared by threads in \emph{different} processes---either with mutual
trust or via protected libraries~\cite{hedayati2019hodor}---we must
address a pair of problems.  While neither is addressed in our current
implementation, solutions appear straightforward.
% both of which we defer to future work.

First, if a heap in a newly rebooted system may be mapped into more than
one process concurrently, we need a mechanism to determine which of
these processes is responsible for recovery.  While several strategies are
possible, perhaps the simplest assigns this task to a dedicated
\emph{manager process}.  Such a process could be launched
by any application that calls \code{init()} on a currently inactive
segment.
% a reader-writer variant of the robust \code{pthread\_mutex\_t} that we
% currently use to detect the need for recovery, as described in
% Section~\ref{sec:ds}.  When a process begins to use the shared
% library, it would try to acquire the lock for ``reading,'' and begin
% running if successful; otherwise, it would either detect a crash by
% receiving an error return (e.g., \code{EOWNERDEAD}) or need to wait
% for the lock to be released.  The first process to receive
% \code{EOWNERDEAD} would acquire the lock for ``writing'' and perform
% recovery, knowing that other processes would be locked out of the heap
% while it did so.

Second, we must consider the possibility that a process may crash (due
to a software bug or signal) while others continue to use the heap.
While nonblocking allocation ensures that the heap will remain usable
and consistent, blocks may leak for the same reasons as in a
full-system crash: they may be allocated but not yet attached, detached
but not yet deallocated, held in a per-thread cache, or held in a limbo
list awaiting safe reclamation.  Given our reliance on post-crash garbage
collection, these blocks can be recovered only by tracing from
persistent roots.  
As indicated at the end of Section~\ref{sec:conservative-gc}, we assume that
crashes are rare and that it will be acceptable to implement blocking,
``stop-the-world'' collection when they occur.  A likely implementation
would employ a failure detector provided by the operating system and
fielded by the manager process mentioned above.
In the wake of a single-process crash, the manager could initiate
stop-the-world collection using a quiescence mechanism adapted from
asymmetric locking~\cite{vasudevan-pact-2010}.

%% WTC: Instead of arguing Ralloc could survive independent failures, we could
%% just mention Ralloc has potential to survive (aka it's not a selling point of
%% Ralloc being nonblocking but hopefully it can be used in the future).
% \note{MLS: Not much we can do about it now, but I expect referees to
%   criticize us for not yet surviving independent failures.  Being able
%   to do so is a principal motivation for nonblocking allocation.}

%% WTC: I'm not sure I fully understand this note.
%% MLS: The following seems unnecessarily apologetic to me:
% We consider the more general problem of tolerating independent process
% failures orthogonal to the task of designing a persistent memory
% allocator since it applies to all kinds of persistent shared libraries
% and may require adding operating system support.

\subsection{Position Independence}\label{sec:PIdata}

There are several reasons not to implement pointers as absolute virtual
addresses in persistent memory.  If an application uses more than one
independent persistent data structure, the addresses of those structures
will need to be distinct.  If new applications can be designed to use
arbitrary existing structures, then every such structure would need
to have a globally unique address range, suggesting the need for global
management and interfering with security strategies like ASLR.

One option, employed 
% Wt: I didn't see this approach in NV-Heaps; instead, it employs the way
% similar to ours doesn't it?
% by NV-Heaps~\cite{coburn2011nvheaps} and 
by some
% our
earlier work on InterWeave~\cite{chen-icpp-2002}, is to explicitly
relocate a heap when it is first mapped into memory, ``swizzling''
pointers as necessary.  Unfortunately, this approach requires precise type
information and still requires that all concurrent users map a heap at
the same virtual address.

Some systems (e.g., PMDK~\cite{rudoff2014pmdk}) use offsets from the
beginning of a destination segment rather than absolute addresses, but usually
consume 128 bits per pointer.
This \emph{based pointer} convention requires that the
starting address of the segment be available (e.g., in a reserved
register) in order to convert a persistent pointer to a virtual address.
An attractive alternative, used by NV-Heaps~\cite{coburn2011nvheaps} is
to calculate the offset not from the beginning of the segment but from
the location of the pointer itself, since that location is certain to be
conveniently available when storing to or loading from it.  Chen et
al.~\cite{chen2017offholder} call such offset-based pointers
\emph{off-holders}.

% An attractive alternative is to use a position-independent pointer
% representation---typically the offset of the target from the beginning
% of the heap, named as \emph{based pointers}. The state-of-the-art
% approach is \emph{implicit self-contained representations} of pointers
% proposed by Chen et al.~\cite{chen2017offholder}, which keeps the size
% of the pointer within 64 bits and has all the information stored
% within the pointers for dereference. The intra-region variant in this
% approach is the off-holder pointer, which stores the offset of the
% target from the pointer itself, and which can be dereferenced without
% fetching a base address~\cite{chen2017offholder,coburn2011nvheaps}.

We implement off-holders as 64-bit \code{pptr<T>} smart pointers in C++, and
instruct programmers to use them instead \code{T*} pointers.  All of the
usual pointer operations work as one would expect, with no additional
source-code changes.  Chen et al.\ report
overheads for this technique of less than 10\%.

For cross-region metadata pointers within the same instance of Ralloc
(e.g., persistent roots that reside in the metadata region and point to
the superblock region), \code{pptr} takes an optional template parameter
as the index of a region. The default value indicates that this is an
off-holder pointing to a target in the current region.
%% MLS: let's not hard-code small integers in our text!
% location and that of the pointed-at block, or 0, 1, or 2, indicating
% that this is a based pointer holding the offset of the target from the
% beginning of the region.
Three other values can be used to indicate a based pointer for the
metadata, descriptor, or superblock region of the segment.  Ralloc
records the base address of regions during initialization, allowing it
to look up these addresses while converting a region-specific pointer to
an absolute address.  Note that such application programmers never need
based pointers: they occur only within the code of Ralloc itself.

Given a hard limit on the size of a superblock region (currently
1\,TB), Ralloc is able to repurpose some of the bits in a 64\,b
\code{pptr}. As noted in Section~\ref{sec:ds}, part of each list head
is used for an anti-ABA counter.  For an off-holder, the unused bits
hold an arbitrary uncommon pattern that is masked away during use; this
convention serves to reduce the likelihood that frequently-occurring
integer constants will be mistaken for off-holders during conservative
post-crash GC\@.

The \code{pptr} implementation does not currently support general
cross-heap references.  Among our near-term plans is to implement a
\emph{Region ID in Value (RIV)}~\cite{chen2017offholder} variant of
\code{pptr}, retaining the smart pointer interface and the size of 64 bits.
% keeping the interface remain still.

%% WTC: I did some trick here. This is not the real implementation but
%% equivalent. Will update the code to this design soon.
% \note{MLS: why do we do this?  Why not use pptr<T> consistently?}

\section{Correctness}~\label{sec:proof} 

% This section addresses the correctness of Ralloc. 
During crash-free execution, LRMalloc is both safe and live:
blocks that are concurrently in use (\code{malloc}ed and not yet
\code{free}d) are always disjoint (no conflicts), and blocks that are
\code{free}d are eventually available for reuse (no leaks).
We argue that Ralloc preserves these properties, and is additionally
lock free and recoverable.

% \begin{comment} MLS
% Much inherited from LRMalloc
% To prove:
%     correct during crash-free operation
%         safety: no in use blocks overlap
%         liveness: freed blocks are eventually available for reuse,
%             assuming threads keep running (or we get to recover)
%     lock-free
%     recoverable
% \end{comment}

% \subsection{Crash-free Execution}
\begin{theorem}[\textbf{Overlap freedom}]
    \label{theorem:lin}
    % Ralloc is linearizable.
    Ralloc does not overlap any in-use blocks.
\end{theorem}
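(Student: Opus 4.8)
The plan is to reduce overlap freedom to a single invariant that is maintained continuously, both during crash-free execution and across post-crash recovery: \emph{at every instant, the set of blocks marked allocated by the allocator metadata is pairwise disjoint, and every in-use (reachable) block belongs to that set}. Overlap freedom is then immediate, since two distinct in-use blocks are two distinct allocated blocks, hence disjoint. The proof splits into three parts: (i) crash-free operation, by reduction to LRMalloc; (ii) the one structure Ralloc adds during normal operation, the superblock free list; and (iii) the recovery procedure.

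For (i), Ralloc's small- and large-object paths are LRMalloc's paths with flush/fence instructions inserted and with offset-based rather than absolute pointers; neither change alters the logical effect of any read, write, or CAS, so neither can create an overlap, and the invariant carries over from LRMalloc~\cite{leite2018lrmalloc}. For (ii), I would argue by the standard Treiber-stack reasoning together with the descriptor state machine of Section~\ref{sec:ds}: a superblock is pushed onto the free list only when a CAS moves its anchor to state \code{EMPTY} (no live blocks), and it is popped --- and its anchor reinitialized, possibly for a different size class --- before any of its blocks is handed out, so a superblock is never simultaneously free-listed and carved into in-use blocks; disjointness across distinct superblocks is preserved because distinct superblocks occupy disjoint address ranges. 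Large allocations, which occupy a run of consecutive superblocks recorded in the first descriptor, are disjoint for the same reason, since the \code{used} word is advanced by CAS (and flushed) before any byte of the run is touched.

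For (iii), the invariant holds just before a crash; what survives is the persisted state (at least the superblock region's \code{used} word and, for every superblock ever used for allocation, its size-class index and block size, per Figure~\ref{fig:descsb}). Recovery (Section~\ref{sec:rec}) traces from the persistent roots to obtain a transient set $S$ of reachable blocks, then rewrites every descriptor, every per-class partial list, and the superblock free list so that exactly the blocks of $S$ are marked allocated. I would show (a) $S$ is pairwise disjoint and (b) every block in use at crash time lies in $S$. Claim (b) is essentially definitional: ``in use'' for a \code{malloc}/\code{free}-with-GC allocator means reachability from the persistent roots in the post-crash heap image, which is precisely what \code{collect()} in Figure~\ref{fig:gc} enumerates (the conservative default filter can only enlarge $S$). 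For (a), \code{visit} maps any candidate address to the unique block of its containing superblock under that superblock's persisted block grid, so two members of $S$ in one superblock coincide or are disjoint, and members in different superblocks are trivially disjoint; the sweep (steps 6--9) then places on free lists exactly the complement of $S$, so no traced block is also free-listed. Since \code{init()} leaves the heap dirty until \code{close()}, an interrupted recovery simply reruns, and when recovery completes the invariant is re-established and part (i) applies inductively thereafter.

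I expect part (iii), and within it claim (a) for superblocks whose descriptors cannot be trusted, to be the main obstacle. A conservative false-positive pointer may land in a superblock that was never used for allocation (or whose size metadata was lost), so its recovered ``block grid'' is ill-defined; I would need to argue that such superblocks have their size treated as invalid and are handled uniformly (the block size read as $0$ in Figure~\ref{fig:gc}), so that at worst whole, still mutually disjoint, superblocks are conservatively retained --- consistent with the paper's stated position that conservative tracing may leak memory but never compromises safety --- and that the multi-superblock large-block case, where a pointer may point into a non-initial superblock, is reconstructed consistently from the size in the first descriptor. Pinning down exactly which metadata is guaranteed intact after a crash, and verifying that the reconstruction in steps 6--9 never moves a block of $S$ onto a free list, is the crux of the argument.
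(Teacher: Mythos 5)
Your proposal is correct, and for the part of the claim the paper actually argues---crash-free execution---it follows essentially the same route: inherit block-level disjointness from LRMalloc, observe that exclusive allocation rights to a superblock are transferred atomically (CAS on the Treiber-stack heads, on the descriptor anchor, and on the \code{used} word), and note that a superblock's block size can change only when it cycles through the free list, so blocks of different sizes that overlap in space never overlap in time. Your parts (i) and (ii) are a slightly more systematic packaging of exactly those observations (the paper folds them into one narrative and also explicitly notes that concurrent \emph{deallocation} into a superblock one cannot allocate from is mediated by anchor CASes, a point you leave implicit under the LRMalloc reduction). Where you diverge is part (iii): the paper scopes this theorem to crash-free operation and defers everything about post-crash state to the separate Recoverability theorem, whereas you fold the recovery procedure into the overlap-freedom invariant. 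That broader reading is defensible and arguably strengthens the statement---your identification of the crux (conservative false positives landing in superblocks with untrusted or absent size metadata, and the multi-superblock large-block case) is exactly the issue the paper's design addresses by persisting \code{size class} and \code{block size} before a superblock is used and by treating an unreadable size as invalid---but it proves more than the paper asks of this theorem, and the paper's own proof of it contains none of that material.
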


% \begin{comment} MLS
% observations
%     superblock free and partial lists are essentially Treiber stacks
%     only one thread at a time -- the one that removes it from a list --
%         can allocate from a given superblock
%     blocks in separate superblocks never overlap
%     all blocks in the same superblock are the same size, and do not
%         overlap
%     sizes of blocks change only when a superblock cycles through EMPTY
%     races on anchors are mediated by CAS
%     (this has to be tweaked to accommodate large allocations)
% \end{comment}

\begin{proof}[Proof sketch]
This property is essentially inherited from LRMalloc.
All small allocations are eventually fulfilled from thread-local caches,
which are recharged by removing superblocks from the global free and
partial lists.  Large allocations, likewise, are fulfilled with entire
superblocks, obtained using CAS to update the \code{used} size field. Only the
CASing thread has the right to allocate from the new superblocks.

The global lists are lock-free Treiber stacks~\cite{treiber-1986}.
% LIFO structures in which push and pop are implemented as CASing the head.
Only one thread at a time---the one that removes a superblock from a
global list---can allocate from the superblock.
Expansion of the heap (to create new superblocks) likewise happens in a
single thread, using CAS to update the \code{used} size field.
%% MLS: the following is implied by the identical sentence in the
%% previous paragraph, I think:
% Only the CASing thread has the right to allocate from the new superblocks.

Small blocks in separate superblocks are disjoint, as are blocks within
a given superblock.  The blocks that tile a superblock change size only
when the superblock cycles through the free list; the superblocks that
comprise a large allocation likewise cycle through the free list.  Thus
blocks of different sizes that overlap in space never overlap in time.

A thread that cannot allocate from a given superblock may
still \emph{deallocate} blocks, but the free list within the superblock
again functions as a Treiber stack, with competing operations mediated
by CASes on the anchor of the corresponding descriptor.
These observations imply that allocations of the same block
never overlap in time.
\end{proof}

\begin{theorem}[\textbf{Leakage freedom}]
    Freed blocks in Ralloc eventually become available for reuse.
\end{theorem}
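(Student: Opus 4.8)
The plan is to trace the possible fates of a block after it is passed to \code{free()} and show each leaves the block in a location from which a later \code{malloc()} could obtain it. I would begin with a case analysis of the deallocation path. For a small block, \code{free()} either (i) inserts the block into the caller's thread-local cache of the appropriate size class, or (ii) — when that cache is full — first flushes the entire cache, pushing every cached block back onto the block free list of its superblock via a CAS on the descriptor's anchor (moving \code{FULL}$\to$\code{PARTIAL} descriptors onto the size class's partial list and \code{FULL}$\to$\code{EMPTY} descriptors, retired, onto the superblock free list), and then inserts the newly freed block. For a large block, \code{free()} splits it into its constituent superblocks and pushes their descriptors onto the superblock free list. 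In every case the block comes to rest in exactly one of three places: a thread-local cache, the block free list of a superblock whose descriptor is currently on a partial list, or the block free list of a superblock whose (\code{EMPTY}) descriptor is on the superblock free list.

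Next I would argue that each of these three resting places is \emph{reusable}. A block in a thread's own cache is handed back out by that thread's next \code{malloc()} of the matching size class. A block on a partial superblock is made available when any thread with an empty cache of that size class refills from the partial list, since the refill pulls that superblock's whole block free list into the cache. A block sitting inside an \code{EMPTY} superblock on the superblock free list is available because \code{malloc()} takes a fresh superblock from that list whenever the relevant partial list is empty and re-tiles it into blocks of the requested size class — so the storage is reusable even for a different size class than it last served, by exactly the re-tiling mechanism invoked in the proof of Theorem~\ref{theorem:lin}. (On \code{close()} the thread-local caches are drained back to superblocks as well, so nothing is stranded at program exit either.)

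The delicate step — and the one I expect to be the main obstacle — is ruling out permanent stranding of a freed block, which has two sub-cases. The first is a descriptor that transitions \code{PARTIAL}$\to$\code{EMPTY} while still linked on a partial list: by design it is not retired at that moment but only later, when it is next fetched from the partial list. I would argue this is harmless because the partial list is a lock-free Treiber stack; as long as allocation requests of that size class continue, fetches keep succeeding (lock-freedom guarantees system-wide progress, so a fetch cannot be blocked indefinitely behind a stalled operation), and the descriptor is eventually popped, found \code{EMPTY}, retired, and moved to the superblock free list — and if no further requests of that size class ever occur, the block is still vacuously ``available'' in that any such request would obtain it. The second sub-case is verifying that no code path simply drops a block: every removal of a block from a cache is matched by a push onto some superblock's block free list (the flush loop), and every removal of a superblock from a list is matched by distributing its blocks into a cache or by re-pushing its descriptor, so a block cannot be lost ``in flight.'' I would also observe that Ralloc never permanently consumes descriptors — the descriptor region is fixed and in 1-to-1 correspondence with superblocks — so there is no safe-memory-reclamation limbo that could hold blocks hostage. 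Combining the case analysis with these no-stranding arguments yields the theorem; the crash case adds no leaks because post-crash GC (Section~\ref{sec:rec}) re-derives exactly the reachable set and rebuilds the free and partial lists accordingly.
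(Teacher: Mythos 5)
Your proposal is correct and follows essentially the same route as the paper's proof sketch: a case analysis showing every freed block comes to rest in a thread-local cache, a partial superblock, or the superblock free list, each of which feeds back into \code{malloc}, with the caveat that the exact time of reuse depends on the future pattern of requests. Your treatment is considerably more detailed (notably the handling of \code{PARTIAL}$\to$\code{EMPTY} descriptors awaiting retirement and the no-stranding argument), but it adds rigor rather than a new idea.
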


% \begin{comment} MLS
%     small blocks go back into thread-local cache.
%     superblocks sometimes go back to partial lists (or to free list if
%         EMPTY), where they can be acquired by any thread
%     large blocks when deallocated go right back to the free list
%     note that safe memory reclamation, if employed, is layered on top of
%         Ralloc: what we're talking about here is deallocation, not
%         retirement.
%     no guarantees, of course, as to _when_ an available-for-reuse block
%         might actually be reused.
% \end{comment}

\begin{proof}[Proof sketch]
    Reasoning here is straightforward.  Small blocks, when deallocated,
    are returned to the thread-local cache.
    Superblocks are returned to global partial lists or (if
    \code{EMPTY}) to the global free list when the thread-local cache is
    too large, or when a large block is deallocated.  In either case,
    deallocated blocks are available for reuse (typically by the same
    thread; sometimes by any thread) as soon as \code{free} returns.
    Exactly \emph{when} reuse will occur, of course, depends on the
    pattern, across threads, of future calls to \code{malloc} and
    \code{free}.  Note that safe memory
    reclamation~\cite{michael2004hazard, wen2018ibr}, if any, is layered
    \emph{on top} of \code{free}: the Ralloc operation is invoked not at
    retirement, but at eventual reclamation.
\end{proof}

% \subsection{Liveness} 
\begin{theorem}[\textbf{Liveness}]
    Ralloc is lock free during crash-free execution.
\end{theorem}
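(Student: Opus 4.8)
The plan is to reduce the claim to three observations: the common (fast-path) case of \code{malloc}/\code{free} performs no inter-thread synchronization at all; the uncommon (slow-path) case uses only the lock-free primitives inherited from LRMalloc together with one new lock-free Treiber stack; and the instructions Ralloc adds for persistence---flushes, fences, and offset-pointer arithmetic---run in bounded hardware time and never force a thread to wait on another. Since LRMalloc is already known to be lock free, it suffices to check that none of Ralloc's modifications (the superblock free list, the descriptor-linked partial lists, the \code{used}-field expansion protocol, and the added flush/fence pairs) introduces an unbounded wait. First I would dispatch the fast path: a \code{malloc} that hits in the thread-local cache, and a \code{free} that finds the cache non-full, touch only thread-private state and hence complete in a bounded number of steps regardless of other threads; this portion is in fact wait free.

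Next I would walk the slow paths operation by operation. On a cache miss, \code{malloc} either (i) pops a descriptor from the size class's partial list and then CASes that descriptor's anchor to claim its free blocks, (ii) pops a descriptor from the superblock free list, or (iii) CASes the \code{used} field of the superblock region to carve out fresh superblocks. On an overfull cache, \code{free} CASes the anchors of the relevant descriptors to push blocks back, then, if a descriptor's state became \code{PARTIAL} or \code{EMPTY}, pushes that descriptor onto the partial list or the superblock free list with a CAS. Every loop here is a standard compare-and-swap retry loop: a CAS that fails does so only because some concurrent operation's CAS succeeded, so the system has made progress, and a CAS that succeeds is itself progress. The partial lists and the superblock free list are Treiber stacks with an ABA counter packed into the \code{pptr} list head (Section~\ref{sec:ds}), so their \code{push}/\code{pop} are the textbook lock-free operations; linking descriptors rather than superblocks changes the payload, not the synchronization structure. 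Heap expansion via the \code{used} field is a single CAS with the same win/witness argument, and the attendant demand paging is a bounded-time fault under the DAX model of Section~\ref{sec:HW_and_OS}.

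Finally I would argue that persistence adds nothing blocking. A flush/fence pair (a \code{clwb} followed by an \code{sfence}) is a bounded hardware operation that does not spin on any software lock, so inserting such pairs after the bold fields of Figure~\ref{fig:descsb} preserves the bounded-step property of each attempt; the smart-pointer conversions are plain arithmetic. The one mutex in the system, the dirty indicator, is acquired once in \code{init()} and released once in \code{close()}; it is held continuously in between and is never touched on the \code{malloc}/\code{free} path, so it plays no role during crash-free steady-state execution. Combining the cases: in any crash-free execution with at least one pending allocator operation, some operation completes within a bounded number of its own steps, which is exactly lock freedom.

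I expect the main obstacle to be the bookkeeping needed to confirm that the persistence machinery is genuinely non-blocking---checking, for each added flush and fence and for each new list manipulation, that no thread is ever required to wait for another to finish a write-back or a list update before it may proceed (i.e., that Ralloc never uses a flush or a CAS as a single-writer ``gate''). The other point requiring care is showing that the descriptor-linked partial lists and the new superblock free list interact with the anchor CASes exactly as LRMalloc's superblock-linked lists do, so that the inherited progress argument transfers verbatim; everything else is a routine case analysis.
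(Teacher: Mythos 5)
Your proposal is correct and follows essentially the same argument as the paper: identify the only unbounded loops as CAS-retry loops on the Treiber-stack-style lists, the anchors, and the \code{used} field, observe that a failed CAS witnesses another thread's progress, and dispose of demand paging and the persistence instructions as bounded, non-blocking operations (the paper states the paging point as an explicit assumption rather than a DAX consequence). Your additional remarks on the wait-free fast path and the dirty-indicator mutex are correct elaborations of the same reasoning.
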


\begin{proof}[Proof sketch]
    The only unbounded loops in Ralloc occur in the Treiber-stack--like
    operations on the superblock free and partial lists and the free
    lists of individual superblocks, and in operations on anchors and the
    \code{used} size field.  In all cases, the failure of a CAS
    that triggers a repeat of a loop always indicates that another
    thread has made forward progress.

    Significantly, there are no explicit system calls (e.g., to
    \code{mmap}) inside Ralloc's allocation and deallocation routines.
    We assume that implicit OS operations, such as those related to
    demand paging and scheduling, always return within a reasonable
    time; we do not consider them as sources of blocking in Ralloc.
\end{proof}

\begin{theorem}[\textbf{Recoverability}]
\label{the:ralloc}
    Ralloc is recoverable.
\end{theorem}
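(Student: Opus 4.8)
The plan is to show that the recovery procedure of Section~\ref{sec:rec} always brings the allocator's metadata to a state in which the allocated blocks are exactly the blocks reachable from the persistent roots, up to the unavoidable conservative false positives discussed in Section~\ref{sec:filter} (which, as noted there and in~\cite{boehm1988gc}, can only cause a harmless leak, never an overlap). Throughout I would restrict attention to the full-system, fail-stop model of Section~\ref{sec:HW_and_OS}; the extension to independent process failures is only sketched in Section~\ref{sec:independent} and is not claimed by Theorem~\ref{the:ralloc}.

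First I would establish a \emph{consistency lemma} for the persisted state. Immediately after a crash, the contents of the fields Ralloc flushes and fences online---the \emph{size}/\emph{used} words of the superblock region, and the \emph{size class} and \emph{block size} fields of every descriptor whose superblock has ever been used---reflect a state that actually arose pre-crash, and in particular: (i) the \emph{used} prefix covers every superblock that has ever been handed out, because the \code{used} field is advanced with a flushed-and-fenced CAS before the new superblocks are made available; and (ii) for every superblock that currently tiles live blocks, its descriptor's size information was persisted before any such block could have been allocated, by the ordering rule of Section~\ref{sec:ds} (the size class, and block size for large blocks, is flushed before the superblock is first used), hence before any pointer into one of its blocks could have been stored into a persistent root or into another reachable block. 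Combined with the application's durable linearizability (Section~\ref{sec:model}), which guarantees that the roots and the inter-block pointers form a consistent cut of the data-structure history, this yields a well-defined ``in-use'' set---the blocks reachable from the roots---together with a correct recoverable size for each of them.

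Next I would verify that the trace in step~\ref{sp:gc} (the \code{collect()} routine of Figure~\ref{fig:gc}) computes exactly this set, or a conservative superset of it. Every live block lies in the superblock region (all Ralloc allocations do, including large ones), so the guard in \code{visit} never discards a reachable block; the consistency lemma lets \code{filter} read a correct \emph{size} (or, with a user-supplied filter, a correct list of fields) for each visited block, so all outgoing pointers are enumerated; and the trace terminates because \emph{visitedBlk} grows monotonically and is bounded by the number of blocks in the used prefix. With precise filters for every persistent type the traced set equals the reachable set; with the default conservative filter it may contain extra blocks whose bit patterns look like pointers, which---exactly as permitted by our definition of recoverability---are then treated as in use, producing at worst a leak. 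I would then show that steps~\ref{sp:sweep}--\ref{sp:recon} turn the traced set into correct metadata: the sweep marks every untraced block free and every traced block allocated; rebuilding each descriptor's anchor (in-superblock free-list head, free count, and \code{EMPTY}/\code{PARTIAL}/\code{FULL} state), the per-size-class partial lists, and the superblock free list from the post-sweep occupancy describes precisely the traced set as allocated with all remaining space on the free structures; and the thread-local caches, reinitialized empty in step~2, hold nothing. After the concluding flush and fence of step~\ref{sp:gc_e}, the persistent metadata is self-consistent and its allocated set is the in-use set (modulo conservative positives).

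The part I expect to be the real obstacle is the crash tolerance \emph{of recovery itself}. Because \code{init()} leaves the dirty indicator set until \code{close()} (Section~\ref{sec:rec}), any crash during steps~\ref{sp:1_s}--\ref{sp:gc_e} causes the whole procedure to be re-run, so I must argue this re-run is sound: recovery must never overwrite the persisted fields it depends on. Here the key observation is that the \emph{size}/\emph{used} words and the descriptors' \emph{size class}/\emph{block size} fields are never mutated by recovery---only the transient anchor, the list-link fields, and the thread-local caches are touched---so re-execution starts from an unchanged persistent state and is deterministic, yielding the same final metadata. Establishing that the set of fields recovery reads is disjoint from the set it writes before the terminal fence in step~\ref{sp:gc_e}, and hence that recovery is effectively idempotent, is the crux of the proof; once it is in hand, recoverability follows by induction on the number of nested crashes during recovery.
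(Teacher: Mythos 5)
Your proposal is correct and follows essentially the same route as the paper's proof sketch: the persisted \emph{size class}/\emph{block size} (and \code{used}) fields make every reachable block's extent recoverable, tracing from the persistent roots (with user filters or the conservative default) yields the in-use set up to harmless conservative positives, and the sweep/reconstruction steps leave exactly that set allocated. You go into more detail than the paper in two places---the flush-and-fence-before-use consistency argument and the idempotence of recovery under crashes during recovery itself (which the paper handles only implicitly via the dirty indicator)---but these are elaborations of the same argument rather than a different approach.
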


Recall that an allocator is \emph{recoverable} if it ensures, in the
wake of post-crash recovery, that the metadata of the allocator will
indicate that all and only the ``in use'' blocks are allocated.  For
Ralloc, ``in use'' blocks are defined to be those that are reachable
from a specified set of persistent roots.  In support of this
definition, Ralloc assumes that the application follows certain
rules. Specifically:

\begin{enumerate}[leftmargin=1.5em,parsep=0ex plus .5ex]
    \item It is (buffered) durably linearizable (Ralloc does not transform a
      transient application to be persistent). \label{enum:per}
    \item It registers persistent roots in such a way that all blocks it
      will ever use in the future are reachable. \label{enum:reg}
    \item It eventually attaches every allocated block to the structure
      to make it reachable. \label{enum:link} 
    \item It eventually calls \code{free} for every detached block. \label{enum:detach} 
    \item It specializes a filter function for any block whose internal
      pointers are not 64-bit aligned \code{pptr}s. \label{enum:pptr}
\end{enumerate}

\begin{proof}[Proof sketch]

    These rules ensure that the application never leaks blocks during
    crash-free operation (rules~\ref{enum:link} and~\ref{enum:detach}), and
    that it enables GC-based recovery (rules~\ref{enum:reg}
    and~\ref{enum:pptr}).
    % Given adherence to these rules, we argue that:
    The size information of any in-use block in a descriptor is safely
    available to recovery via the \code{size} \code{class} and
    \code{block} \code{size} persistent fields.  Assuming that pointers
    in types without
    specialized filter functions are always aligned and in
    \code{pptr} format, Ralloc's garbage collection, with or without
    specialized filter functions, is guaranteed to find all in-use
    blocks by tracing from the persistent roots.  Having identified
    these blocks, Ralloc re-initializes its metadata accordingly,
    updating each descriptor and reconstructing free lists and partial
    lists.  By the end of the recovery, all and only the in-use blocks
    are allocated (where ``in use'' is defined to include all blocks
    found by the collector, even if they were not actually
    \code{malloc}ed during pre-crash execution).
\end{proof}

\section{Experiments}\label{sec:exp}
\subsection{Setup}
We ran all tests on % 64-bit
Linux 5.3.7 (Fedora 30), on a machine
with two Intel Xeon Gold 6230 processors, each with 20 physical
cores and 40 hyperthreads. Threads were first pinned one per core
in the first socket, then on the extra hyperthreads, and finally on
the second socket. All experiments were conducted on 6 channels of
128\,GB Optane DIMMs, all in the first socket's NUMA
domain. Persistent allocators ran on an EXT4-DAX filesystem built on
NVM; transient allocators ran directly on raw
NVM~\cite{izraelevitz2019performance}.

We compared Ralloc to representative persistent and transient allocators
including \emph{Makalu}~\cite{hpe2017atlasmakalu}, \emph{libpmemobj} from
\emph{PMDK}~\cite{rudoff2014pmdk}, \emph{JEMalloc}~\cite{evans2006jemalloc}, and
\emph{LRMalloc}~\cite{leite2018lrmalloc} (Ralloc without flush and fence). Since
all benchmarks and applications in our experiments used
\code{malloc}/\code{free}, for PMDK's \\\code{malloc-to}/\code{free-from}
interface we had to create a local dummy variable to hold the pointer for easy
integration. For all tests we report the average of three trials.

\subsection{Benchmarks} \label{sec:bench}

\newlength{\microfigwidth}
\begin{figure*}[t!]
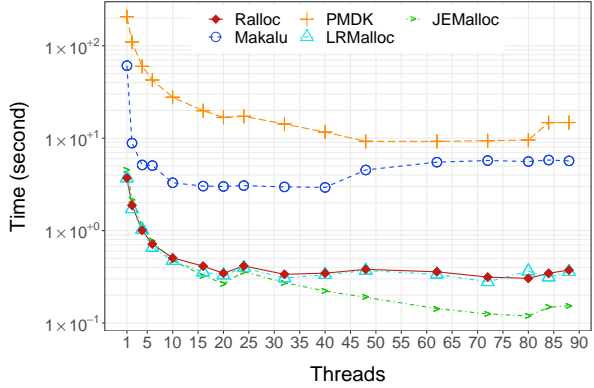
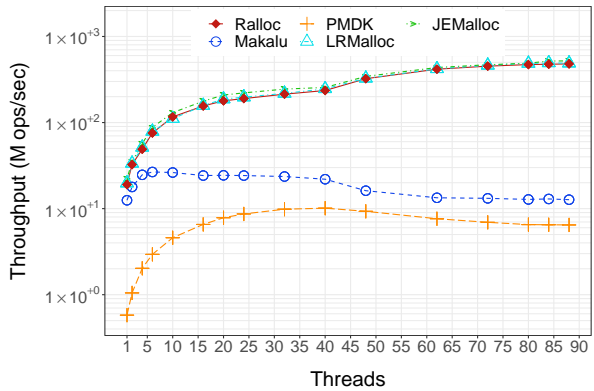
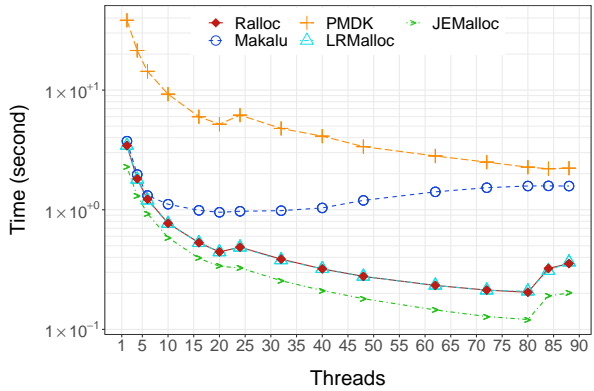
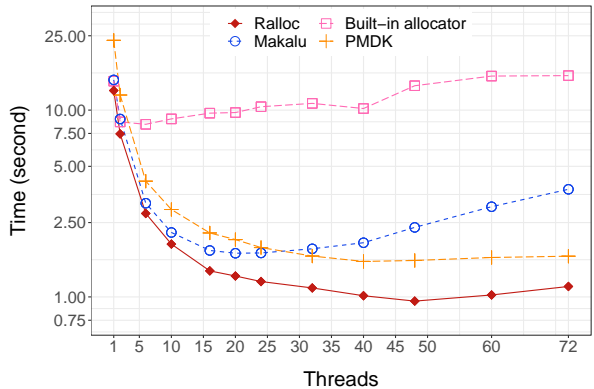
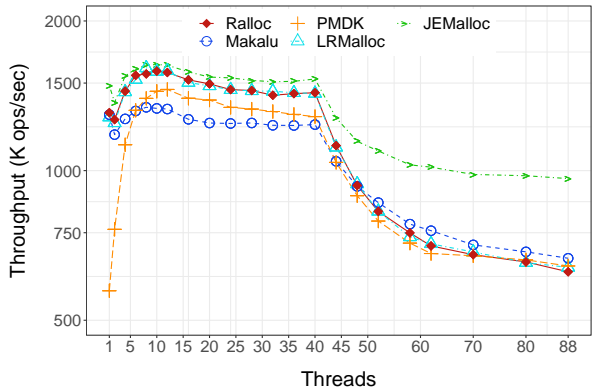

    \centering
    \microfigwidth .34\textwidth
    \begin{subfigure}{1.3\microfigwidth}
        \includegraphics[width=1.4\microfigwidth]%
            {fig/threadtest_linchart}%
        \vspace{-1ex}%
        \caption{Threadtest (lower is better)}
        \label{fig:thd-time}
        % \Description{Log scale Y axis from 1 to 1000 seconds.
        %   X axis from 1 to 88 threads.
        %   Time generally drops with increased thread count.
        %   JEMalloc does best, but Ralloc and LRMalloc match it out to 40
        %   threads.}
        % \vspace*{-\baselineskip}
    \end{subfigure}%
    \hfill
    \begin{subfigure}{1.3\microfigwidth}
        \includegraphics[width=1.4\microfigwidth]%
            {fig/shbench_linchart}%
        \vspace{-1ex}%
        \caption{Shbench (lower is better)}
        \label{fig:sh-time}
        % \Description{Log scale Y axis from 0.05 to 50 seconds.
        %   X axis from 1 to 88 threads.
        %   Time generally drops with increased thread count.
        %   JEMalloc does best. Ralloc and LRMalloc match it out to 10
        %   threads.}
        % \vspace*{-\baselineskip}
    \end{subfigure}%
    \hfill\strut\\[2ex]
    \smallskip
    \begin{subfigure}{1.3\microfigwidth}
        \includegraphics[width=1.4\microfigwidth]%
            {fig/larson_linchart}%
        \vspace{-1ex}%
        \caption{Larson (higher is better)}
        \label{fig:larson-through}
        % \Description{Log scale Y axis from 0.1 to 500 million operations
        %   per second.  X axis from 1 to 88 threads.
        %   Throughput generally rises with increased thread
        %   count, except in the cae of Makalu.  JEMalloc does best;
        %   Ralloc and LRMalloc come close out to 32 threads.}
        % \vspace*{-\baselineskip}
    \end{subfigure}%
    \hfill
    \begin{subfigure}{1.3\microfigwidth}
        \includegraphics[width=1.4\microfigwidth]%
            {fig/prod-con_linchart}%
        \vspace{-1ex}%
        \caption{Prod-con (lower is better)}
        \label{fig:prod-time}
        % \Description{Log scale Y axis from 0.1 to 50 seconds.
        %   X axis from 1 to 88 threads.  Time generally drops with
        %   increased thread count.  JEMalloc does best, but Ralloc and
        %   LRMalloc come close throughout.}
        \vspace*{-\baselineskip}
    \end{subfigure}%
    \hfill\strut\\[2ex]
    \smallskip
    \begin{subfigure}{1.3\microfigwidth}
        \includegraphics[width=1.4\microfigwidth]%
            {fig/vacation_linchart}%
        \vspace{-1ex}%
        \caption{Vacation (lower is better)}
        \label{fig:vacation-time}
        % \Description{Log scale Y axis from 0.75 to 25 seconds.
        %   X axis from 1 to 72 threads.  Time generally drops with
        %   increased thread count.  JEMalloc does best, but Ralloc and
        %   LRMalloc come close throughout.}
        \vspace*{-\baselineskip}
    \end{subfigure}%
    \hfill
    \begin{subfigure}{1.3\microfigwidth}
        \includegraphics[width=1.4\microfigwidth]%
            {fig/ycsbca_linchart}%
        \vspace{-1ex}%
        \caption{Memcached (higher is better)}
        \label{fig:memcached-through}
        % \Description{Log scale Y axis from 0.5 to 2 million operations
        %   per second.  X axis from 1 to 88 threads.
        %   Throughput rises to about 10 threads, then drops.
        %   JEMalloc does best, but Ralloc and LRMalloc come close out to
        %   40 threads.}
        \vspace*{-\baselineskip}
    \end{subfigure}%
    \hfill\strut
    \vspace{1ex}
    \caption{Performance (log2 scaled).  Each socket has a
      total of 20 two-way hyperthreaded cores.}
        % The X axis is the number of concurrent threads;
        % the Y axis is operations per second or time elapsed.
    \label{fig:perf}
    % \Description{Performance graphs.}
\end{figure*}

%Similar to other allocators~\cite{berger2000hoard, aigner2015scalloc,
%bhandari2016makalu},
Our initial evaluation employed
four well known allocator workloads.
% \emph{Threadtest}~\cite{berger2000hoard},
% \emph{Shbench}~\cite{microquillshbench},
% \emph{Larson}~\cite{larson1998larson}, and
% \emph{Prod-con}~\cite{bhandari2016makalu}.

\textbf{Threadtest}, introduced with the
\emph{Hoard} allocator~\cite{berger2000hoard},
% In our evaluation, we spawned $t$ threads, each of which allocated
% $10^5$/t objects of \textbf{64}-byte size and deallocated all of
% them in one iteration. In total $10^4$ iterations were executed.
allocates and deallocates a large number of objects
without any sharing or synchronization between threads.
In every iteration of the test, each thread allocates and deallocates
$10^5$ $64$-byte objects; our experiment comprises $10^4$ iterations.

\textbf{Shbench}~\cite{microquillshbench} is designed as
an allocator stress test.
% The benchmark was configured such that each of $t$ threads allocated and
% partially deallocated objects from $1$-byte to $8$-byte
% sizes in each iteration. In total $10^5$ iterations were executed.
Threads allocate and deallocate many objects, of sizes that vary from $64$ to
$400$ bytes (the largest of Makalu's ``small'' allocation sizes), with
smaller objects being allocated more frequently. Our experiment
comprises $10^5$ iterations.

\textbf{Larson}~\cite{larson1998larson} simulates a behavior called ``bleeding''
in which some of the objects allocated by one thread are left to be freed by
another thread. This test was configured to spawn $t$ threads that randomly
allocate and deallocate $10^3$ objects in each iteration, ranging in
size from $64$ to $400$ bytes.
After $10^4$ iterations, each thread creates a new thread that
starts with the leftover objects from the previous thread and repeats the same
procedure as before.
% passed not yet deallocated
% objects to the new thread to repeat the same thing, and terminated.
Our experiment runs this pattern for $30$ seconds.
% \note{TODO: we may want to round size to 64B for future experiment}

\textbf{Prod-con} is a local re-implementation of a test originally
devised for Makalu~\cite{bhandari2016makalu}.
It is meant to assess performance
% in a benchmark conforming to the
under a producer-consumer workload. The test spawns $t/2$ pairs of
threads and assigns a lock-free M\&S
queue~\cite{michael1996mlsqueue}
to each pair. One thread in each pair allocates $10^7\!\cdot 2/t$
64\,B objects and pushes pointers to them into the queue; the other thread
concurrently pops pointers from the queue and deallocates the objects.

% All experiments were conducted using emulated persistent memory on DRAM.
% That being said, we still believe the performance
% curves in real NVM to be similar and expect Ralloc to perform relatively even
% better in Optane memory \note{Wt: here I want to express that the difference
% between Ralloc and others should be larger, in a good direction} because Ralloc
% performs less write-backs which are considered much slower on Optane DIMMs than
% on DRAMs.
% We believe that using actual NVM would yield similar results, or even
% benefit Ralloc more than competitors, since Ralloc performs fewer
% write-back+fence operations, which are significantly slower on NVM\@.

Performance results appear in Figures~\ref{fig:thd-time}--%
% , \ref{fig:sh-time}, \ref{fig:larson-through}, and
\ref{fig:prod-time}. 
In many cases, curves change shape (generally for the worse) at 20 and
40 threads, as execution moves onto sister hyperthreads and the second
socket, respectively.
% On most benchmarks, Ralloc (and also some other allocators) slows down
% at 20 threads and after 40 threads, the total number of physical cores
% and hyperthreads in one socket, respectively.
The 20-thread inflection point presumably reflects competition for cache and
pipeline resources, the 40-thread inflection point the cost of cross-socket
communication. Overall, Ralloc outperforms and scales better than PMDK and
Makalu on all benchmarks, and is close to JEMalloc for low thread counts.
Makalu, however, usually stops scaling before 20 threads. 

On Threadtest and Shbench, Ralloc performs around 10$\times$ faster than Makalu
and PMDK, presumably because the earlier systems must log and flush multiple
words in synchronized allocator operation, while Ralloc needs no logging at all,
and flushes only occasionally---and then only a single word (the block size
during superblock allocation). 

On Larson, Ralloc performs up to 37$\times$ faster than Makalu. We attribute
this to Makalu's lack of robustness for large numbers of threads. In addition
to the result shown in the figure, we have also tested the allocators on Larson
with a wider range of sizes (64--2048 bytes, the largest ``medium'' allocation
size in Makalu). In this test Makalu stopped scaling after only 2 threads, and
performed up to 100$\times$ slower than Ralloc (1\,M versus 100\,M at 16 threads).
This may suggest that ``medium'' allocations severally compromise Makalu's
scalability.

On Prod-con, Ralloc's performance is close to that of Makalu for low thread counts, but
afterwards scales better. This is because the most of time is spent on
synchronization on the M\&S queues for low thread counts, which covers the
difference in allocation overhead.

\subsection{Application Tests}

We also tested Ralloc on a persistent version of
% the applications
\emph{Vacation} and a local version of \emph{Memcached}.
% Vacation is a simulated online transaction processing system from the STAMP
% benchmark suite~\cite{chi2008stamp}.
Vacation (from the STAMP suite~\cite{chi2008stamp}) is a simulated
online transaction processing system, whose internal ``database'' is implemented
as a set of red-black trees. We obtained the code for this experiment, along
with that of the Mnemosyne~\cite{volos2011mnemosyne} persistent transaction
system, from the University of Wisconsin's WHISPER
suite~\cite{nalli2017whisper}.
Memcached~\cite{libmemcached} is a widely used in-memory key-value store.
We modified it to function as a library rather than a stand-alone
server: instead of sending requests over a socket, the client
application makes direct function calls into the key-value code, much as
it would in a library database like Silo~\cite{tu-sosp13-silo}.  Our
version of memcached can also be shared safely between applications
using the Hodor protected library system~\cite{hedayati2019hodor}; to
focus our attention on allocator performance, we chose not to enable
protection on library calls for the experiments reported here.

% \textbf{Vacation} simulates an online transaction processing system for travel
% reservations.
% The internal ``database'' is implemented as a red-black tree.

% We ran Vacation with the following parameters: $16384$ relations,
% $90$ percent of relations were queried with $5$ queries
% taking place in each transaction, in total $10^6$ transactions were
% performed using $t$ threads, and $98$ percent of queries were to
% create a new reservation.

The \textbf{Vacation} test employs a total of $16384$ ``relations'' in
its red-black trees. Each transaction comprises $5$ queries, and the
$10^6$ transactions performed by each test target $90\%$ of the relations.
All queries are to create new reservations.
Given that the code had been modified explicitly for persistence, we
tested only the persistent allocators, which exclude LRMalloc and
JEMalloc, but include Mnemosyne's built-in allocator, a persistent
hybrid of Hoard~\cite{berger2000hoard} and DLMalloc~\cite{lea2000dlmalloc}.

The \textbf{Memcached} test runs the Yahoo! Cloud Serving Benchmark
(YCSB) \cite{cooper2010ycsb}, configured to be write-dominant
(workload A~\cite{ycsbworkloads} with 50\% reads and 50\% updates). In
total, 100K operations were executed on 100K records.
% each of the $32$ threads sends $50\%$
% insertion and $50\%$ search requests to the server. Every request
% contains a $16$-byte key and an $8$-byte value. Each
% execution runs for $20$ seconds.
% which spawned $32$ threads to generate $50\%$ insertion and $50\%$
% search requests with $16$ bytes keys and $8$ bytes values for $20$ seconds.

% We compared Ralloc to Makalu and to Mnemosyne's built-in allocator (a
% persistent hybrid of Hoard and DLMalloc~\cite{lea2000dlmalloc, volos2011mnemosyne}).
%Due to the pretested performance lagging of
%libpmemobj, we did not include it in this part of tests.
% We excluded PMDK from these tests due to its dramatically lower
% performance, as seen in the prior tests.
% These tests were executed on the same software and hardware configuration as
% those in Section~\ref{sec:bench}, and our reported results are once again from
% the average of 3 trials.

Application performance results appear in Figures~\ref{fig:vacation-time}
and~\ref{fig:memcached-through}.  
% Ralloc matches the performance of
% Makalu in Vacation, and exceeds it by a significant margin in Memcached,
% which has higher memory churn.
% Both outperform the built-in allocator of Mnemosyne.
Results on Vacation resemble those of the allocator benchmarks:
Ralloc scales better than other persistent allocators, and performs
fastest for all sampled thread counts.
Memcached tells a slightly more interesting story:
Performance is relatively flat up to 40 threads, with
Ralloc outperforming both Makalu and PMDK\@.
Performance deteriorates with the cost of cross-socket communication,
however, and Makalu gains a performance edge, outperforming Ralloc by up
to 7\% on 62 threads.
% Similar story is told on Memcached, but until 40 threads after which
% Makalu slightly exceeds all persistent allocators. Makalu performs
% faster than Ralloc up to 7\% at 62 threads (757\,K ops/sec versus
% 706\,K ops/sec).
Our best explanation is that Makalu provides slightly better locality
for applications with a large memory footprint.  In particular,
% We think Makalu may provide slightly better locality, especially on
% applications with large footprint, possibly due to its free lists
% truncation strategy---
instead of transferring an over-full thread-local free list (cache) back to a
central pool in its entirety, as Ralloc does, Makalu returns only half, allowing
the local thread to continue to allocate from the portion that remains.

On memcached's read-dominant workload
(workload B~\cite{ycsbworkloads} with 95\% reads and 5\% updates---not
shown here), Ralloc continues to outperform Makalu by a small amount at
all thread counts.  The curves are otherwise similar to those in
Figure~\ref{fig:memcached-through}.
% Although we only show the result under the write-dominant workload,
% the read-dominant workload (workload B~\cite{ycsbworkloads} with 95\%
% reads and 5\% updates) results in a similar plot, with one notable
% difference: Ralloc performed as faster than Makalu after 40 threads as
% it did for low thread count.

\subsection{Recovery}

In a final series of experiments, we measured the cost of Ralloc's
recovery procedure by running an application without calling
\code{close()} at the end, thereby triggering recovery at the start of a
subsequent run.
% The recovery
% experiments were designed as follows: some number of random key-value pairs are
% inserted into some data structure and each pair has 50\% chance to be
% immediately deleted.  After all pairs are processed, the application exits
% without \code{close()} and then restarts, calling the Ralloc recovery routine.
% Recovery performs filter garbage collection over the entire heap and
% reconstructs the allocator metadata.

\begin{figure}
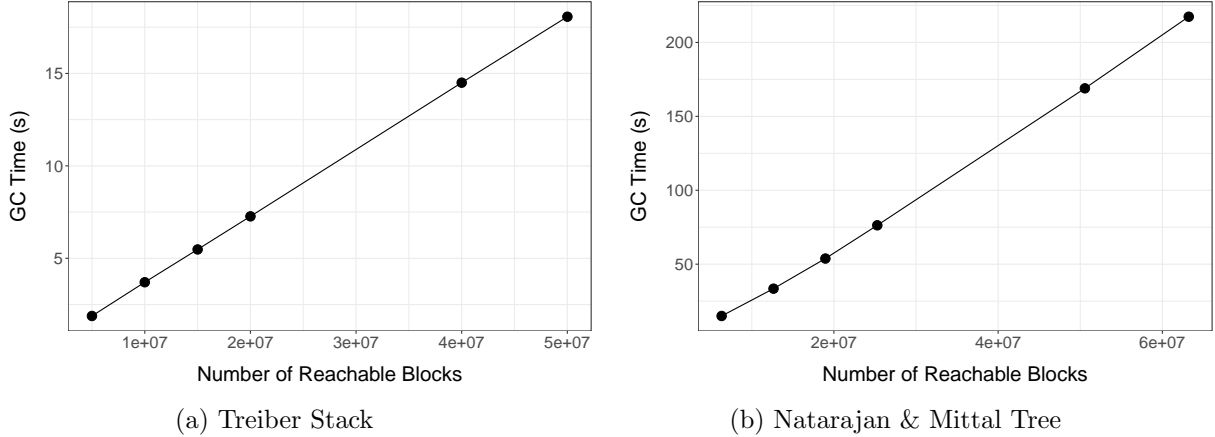

    \centering
    \microfigwidth .34\textwidth
    % \begin{minipage}[b]{2.45\microfigwidth}
    %     \begin{subfigure}{0.9\microfigwidth}
    %         \includegraphics[width=1.2\microfigwidth]%
    %             {fig/threadtest_memory}%
    %         \vspace{-1ex}%
    %         \caption{Threadtest}
    %         \label{fig:threadtest-mem}
    %         % \vspace*{-\baselineskip}
    %     \end{subfigure}%
    %     \hfill
    %     \begin{subfigure}{0.9\microfigwidth}
    %         \includegraphics[width=1.2\microfigwidth]%
    %             {fig/shbench_memory}%
    %         \vspace{-1ex}%
    %         \caption{Shbench}
    %         \label{fig:shbench-mem}
    %         % \vspace*{-\baselineskip}
    %     \end{subfigure}%
    %     \hfill\strut\\[2ex]
    %     \smallskip
    %     \begin{subfigure}{0.9\microfigwidth}
    %         \includegraphics[width=1.2\microfigwidth]%
    %             {fig/larson_memory}%
    %         \vspace{-1ex}%
    %         \caption{Larson}
    %         \label{fig:larson-mem}
    %         \vspace*{-\baselineskip}
    %     \end{subfigure}%
    %     \hfill
    %     \begin{subfigure}{1.1\microfigwidth}
    %         \includegraphics[width=1.2\microfigwidth]%
    %             {fig/prod-con_memory}%
    %         \vspace{-1ex}%
    %         \caption{Prod-con}
    %         \label{fig:prodcon-mem}
    %         \vspace*{-\baselineskip}
    %     \end{subfigure}%
    %     \hfill\strut
    %     \caption{Memory Consumption (lower is better).}
    %     % The X axis is the number of concurrent threads;
    %     % the Y axis is memory consumption.
    %     \label{fig:space}
    % \end{minipage}%
    % \hfill
    \begin{subfigure}{1.3\microfigwidth}
        \includegraphics[width=1.4\microfigwidth]%
            {fig/recovery_link}%
        \vspace{-1ex}%
        \caption{Treiber Stack}
        \label{fig:link-rec}
        % \Description{Y axis from 0 to 20 seconds.  X axis from 0 to 50
        %   million reachable blocks.  Curve is perfectly linear.}
        \vspace*{-\baselineskip}
    \end{subfigure}%
    \hfill
    \begin{subfigure}{1.3\microfigwidth}
        \includegraphics[width=1.4\microfigwidth]%
            {fig/recovery_nat}%
        \vspace{-1ex}%
        \caption{Natarajan \& Mittal Tree}
        \label{fig:nat-rec}
        % \Description{Y axis from 0 to 220 seconds.  X axis from 0 to 64
        %   million reachable blocks.  Curve is almost perfectly linear.}
        \vspace*{-\baselineskip}
    \end{subfigure}%
    \hfill
    \vspace{1ex}
    \caption{GC Time Consumption.}
        % The X axis is the number of reachable blocks;
        % the Y axis is GC time in second.
    \label{fig:rec}
    % \Description{Garbage collection time as a function of block count.}
    \vspace*{-\baselineskip}
\end{figure}

We inserted % $10^7$ and $10^8$
random key-value pairs into a lock-free Treiber stack~\cite{treiber-1986} in one
experiment and, in the other, into the nonblocking binary search tree of
Natarajan and Mittal~\cite{natarajan2014tree}.  We
% ran each experiment three times,
recorded recovery time for varying numbers of reachable blocks;
% and averaged the results of the three executions.
results appear in Figure~\ref{fig:rec}.
As expected, recovery time is linear in the number of reachable blocks.
Per-node time is higher in the case of the tree, presumably due to
poorer cache locality.
After recovery, the application was able to restore the structure
correctly in all cases, and to continue performing operations without error.

While we currently run recovery sequentially, it would be
straightforward in the procedure of Section~\ref{sec:rec} to parallelize
Step~\ref{sp:gc} across persistent roots and Steps
\ref{sp:sweep}--\ref{sp:recon} across superblocks; we leave this to
future work.
% \note{Hs: what are you trying to say in the next sentence?} Despite the
% absence of multithreading, the recovery performance of the current
% implementation is reasonable.

% \begin{table}[]
% \begin{tabular}{c|c|c}
% \hline
% Number of Blocks & GC Time (ms) & Throughput (block/ms) \\ \hline
% 4999693.333      & 1879.666667  & 2659.882958           \\
% 9998407.333      & 3705         & 2698.625461           \\
% 15002075.67      & 5479.333333  & 2737.938131           \\
% 20003902         & 7269.666667  & 2751.69453            \\
% 39999376.67      & 14501.33333  & 2758.324062           \\
% 50004266.67      & 18070        & 2767.253274           \\ \hline
% \end{tabular}
% \caption{Garbage Collection Time on Linked List.}
% \label{tbl:list}
% \vspace*{-\baselineskip}
% \end{table}

% \begin{table}[]
% \begin{tabular}{c|c|c}
% \hline
% Number of Blocks & GC Time (ms) & Throughput (block/ms) \\ \hline
% 6323182          & 14966.66667  & 422.4843207           \\
% 12645524.67      & 33457.66667  & 377.9559643           \\
% 18962842.67      & 53826.33333  & 352.2967569           \\
% 25282656.67      & 76308.33333  & 331.3223545           \\
% 50566856         & 168957       & 299.288316            \\
% 63213065.33      & 217418.6667  & 290.7435056           \\ \hline
% \end{tabular}
% \caption{Garbage Collection Time on Natarajan's Tree.}
% \label{tbl:nat}
% \vspace*{-\baselineskip}
% \end{table}

% \input{src/07.related.tex}    % folded into background
\section{Conclusions}\label{sec:sum} 

In this paper, we introduced the notion of \emph{recoverability} as a
correctness criterion for persistent memory allocators.
Building on the (transient) LRMalloc nonblocking allocator, we then
presented Ralloc, which we believe to be the first recoverable lock-free
allocator for persistent memory.

As part of Ralloc, we introduced the notion of \emph{filter functions},
which allow a programmer to refine the behavior of conservative garbage
collection without relying on compiler support or per-block
prefixing~\cite{cohen2018recovery}.
% allows the programmer to optionally provide precise type information
% to mitigate the drawbacks of conservative GC.
We believe that filter functions may be a useful mechanism in other
(e.g., transient) conservative garbage collectors.

Using recovery-time garbage collection,
% (possibly aided by filter functions),
Ralloc is able to achieve recoverability with almost no
run-time overhead during crash-free execution.
By using offset-based pointers,
% the off-holder pointer representation of Chen et al.~\cite{chen2017offholder},
Ralloc supports \emph{position-independent
  data} for flexible sharing across executions and among concurrent
processes.
% With the help of offline filter garbage collection and
% position-independent smart pointers (\code{pptr}), Ralloc is
% recoverable with negligible online overhead to LRMalloc.

Experimental results show that Ralloc matches or exceeds the performance
of Makalu, the state-of-the-art lock-based persistent allocator, and is
competitive with the well-known JEMalloc transient allocator.
Near-term plans include
the addition of general cross-heap persistent pointers,
integration with persistent libraries~\cite{hedayati2019hodor},
parallelized and optimized recovery,
and detection and (stop-the-world) recovery for independent process
failures.  Longer term, we plan to explore online recovery.
% Besides independent failure detection and inter-region persistent
% pointers, other promising future work directions may also include
% (ideally nonblocking) online recovery for independent process
% failure. 

% Internal fragmentation is another task for allocators. Given the set of ``in
% use'' blocks (provided by complete filter GC), we could reallocate blocks from a
% clean instance of the allocator, effecting compaction during recovery.

% Acknowledgement
% \begin{acks}
% We probably do not need it.
% \end{acks}

% The next two lines define the bibliography style to be used, and the
% bibliography file.
\bibliographystyle{plain}
\bibliography{main}

\end{document}